\documentclass[11pt]{article}
\usepackage[margin=1in]{geometry}

\usepackage{graphicx}
\usepackage{amsfonts}
\usepackage{amsmath}
\usepackage{amssymb}
\usepackage{url}

\usepackage[breakable]{tcolorbox}

\usepackage{algorithm,algpseudocode}

\usepackage{dsfont}
\usepackage{fancyhdr}
\usepackage{indentfirst}
\usepackage{enumerate}
\usepackage[normalem]{ulem}
\usepackage{mathrsfs}
\usepackage{multirow}
\usepackage{diagbox}

\usepackage{amsthm}
\usepackage{color}

\definecolor{DarkGreen}{rgb}{0.2,0.6,0.2}

\definecolor{purple}{rgb}{0.6,0.3,0.8}

\usepackage{natbib}
\usepackage{comment}
\usepackage[colorlinks=true,citecolor=blue,hypertexnames=false]{hyperref}

\def\d{\mathrm{d}}

\newcommand{\E}{\mathbb{E}}

\newcommand{\R}{\mathbb{R}}

\newcommand{\N}{\mathbb{N}}
\newcommand{\p}{\mathbb{P}}

\newcommand{\Q}{\mathbb{Q}}

\renewcommand{\ge}{\geqslant}
\renewcommand{\le}{\leqslant}
\renewcommand{\geq}{\geqslant}
\renewcommand{\leq}{\leqslant}
\renewcommand{\epsilon}{\varepsilon}

\renewcommand{\cdots}{\dots}

\usepackage[capitalise,noabbrev]{cleveref}

\theoremstyle{plain}
\newtheorem{theorem}{Theorem}[section]
\newtheorem*{theorem*}{Theorem}
\newtheorem{corollary}[theorem]{Corollary}
\newtheorem{fact}[theorem]{Fact}
\newtheorem{lemma}[theorem]{Lemma}
\newtheorem{proposition}[theorem]{Proposition}
\newtheorem{conj}{Conjecture}
\theoremstyle{definition}
\newtheorem{definition}[theorem]{Definition}

\newtheorem{example}[theorem]{Example}

\newcommand{\width}{w}

\theoremstyle{remark}
\newtheorem{remark}{Remark}

\usepackage[onehalfspacing]{setspace}

\newcommand{\SP}{\mathrm{SP}}
\newcommand{\ind}{\mathbf 1}

\usepackage{autonum}

\renewcommand{\P}{\mathbb{P}}

\newcommand{\Fcal}{\mathcal F}
\newcommand{\Pcal}{\mathcal{P}}
\newcommand{\Pin}{{\P \in \Pcal}}
\newcommand{\Var}{\mathrm{Var}}
\newcommand{\widebar}[1]{\overline{#1}}
\newcommand{\KL}{{\mathrm{KL}}}
\newcommand{\KLinf}{{\mathrm{KL}_\mathrm{inf}}}

\newcommand{\brackell}{{(\ell)}}
\newcommand{\bracku}{{(u)}}

\newcommand{\bx}{\mathbf x}
\newcommand{\bE}{\mathbf E}

\newcommand{\NN}{\mathbb N}
\newcommand{\EE}{\mathbb E}

\newcommand{\1}{\mathbf{1}}

\def\ddefloop#1{
  \ifx
    \ddefloop#1
  \else\ddef{#1}
    \expandafter
    \ddefloop
  \fi}
\def\ddef#1{\expandafter\def\csname #1cal\endcsname{\ensuremath{\mathcal{#1}}}}

\ddefloop
ABCDEFGHIJKLMNOPQRSTUVWXYZ
\ddefloop

\def\ddefloop#1{
  \ifx
    \ddefloop#1
  \else\ddef{#1}
    \expandafter
    \ddefloop
  \fi}
\def\ddef#1{\expandafter\def\csname #1sf\endcsname{\ensuremath{\mathsf{#1}}}}

\ddefloop
ABCDEFGHIJKLMNOPQRSTUVWXYZ
\ddefloop

\def\ddefloop#1{
  \ifx
    \ddefloop#1
  \else\ddef{#1}
    \expandafter
    \ddefloop
  \fi}
\def\ddef#1{\expandafter\def\csname #1#1\endcsname{\ensuremath{\mathbb{#1}}}}

\ddefloop
ABCDEFGHIJKLMNOPQRSTUVWXYZ
\ddefloop

\usepackage{footmisc}
\makeatletter
\let\old@fnsymbol\@fnsymbol
\renewcommand{\@fnsymbol}[1]{
  \ifnum#1=5 \ensuremath{\diamond}
  \else
    \old@fnsymbol{#1}
  \fi
}
\makeatother

\begin{document}

\title{Combining e-values using demi-supermartingales}

\author{
Jiahao Ming\thanks{Department of Statistics and Actuarial Science, University of Waterloo, Canada.
\texttt{j5ming@uwaterloo.ca}}
\and 
Aaditya Ramdas\thanks{Department of Statistics and Data Science, Carnegie Mellon University, USA.
\texttt{aramdas@cmu.edu}}
\and 
Yi Shen\thanks{Department of Statistics and Actuarial Science, University of Waterloo, Canada. \texttt{yi.shen@uwaterloo.ca}}
\and 
Ruodu Wang\thanks{Department of Statistics and Actuarial Science, University of Waterloo, Canada.   \texttt{wang@uwaterloo.ca}}
\and Ian Waudby-Smith\thanks{Department of Statistics, University of California, Berkeley,  USA. \texttt{ianws@berkeley.edu}} 
}

\maketitle
\begin{abstract}
We present a new method for combining e-variables through demi-supermartingales, which settles an old conjecture in the literature on nonparametric mean testing. It also provides an explicit concentration bound for a certain Kullback--Leibler-type statistic arising in the stochastic multi-armed bandit literature. All of these combination results hold for independent e-variables as well as for the class of co-valid e-variables, whose dependence structure lies somewhere between independence and sequential validity.
The results are further generalized to compound e-variables. 
The proofs proceed by analyzing elementary symmetric polynomials and their behavior as nonnegative demi-supermartingales. 

\medskip
\noindent \textbf{Keywords}: Dependence,  elementary symmetric polynomials, confidence intervals, demi-martingales, Ville's inequality.

\end{abstract}

\section{Introduction}

E-values bridge stochastic processes   and statistical testing, offering several advantages over p-values in sequential testing, multiple testing, and post-hoc-$\alpha$ decision-making.
 A comprehensive statistical treatment of the topic is given by \cite{RW25}.  For a given (possibly composite) null hypothesis, an e-variable  is a nonnegative (extended) random variable with mean at most $1$ under the null hypothesis.  
An e-variable or its realization is called an e-value. 

Let us begin with the following question from \cite{VW21} (its background will be explained later in the paper): given independent e-variables $E_1, \dots, E_n$, how do we combine them into a p-value or a test?
A standard approach for this purpose is described below.
Define, for $\lambda \in [0,1]$, the process $M(\lambda)$ by  $M_0(\lambda)=1$ and 
  \begin{equation}
      \label{eq:M}  M_n (\lambda) =\prod_{i=1}^n \left( (1-\lambda ) + \lambda  E_i\right) \mbox{~~~for $n \in \N$}.
  \end{equation} 
  It is straightforward to see that $M(\lambda)$ is a supermartingale, and 
  using an inequality of~\cite{ville1939etude}, one gets for any fixed $\lambda\in [0,1]$, \begin{equation}\label{eq:ville} \p\left (\sup_{n\ge 1} M_n(\lambda)\ge \frac 1\alpha \right)\le \alpha \mbox{~~~for $\alpha \in (0,1)$.}\end{equation}
  Indeed, \eqref{eq:ville} exemplifies a fundamental technique for sequential hypothesis testing with e-values.\footnote{More generally, $\lambda$ is allowed to vary across different values of $n$ as long as it forms a predictable process; see, e.g., \cite{WR24}.} Said differently, the reciprocal of $\sup_{n\ge 1} M_n(\lambda)$ is a p-value.

 In this paper, we look at  $M(\lambda)$ from a very different angle; we take the supremum over $\lambda$ rather than over $n$. \citet[Theorem 4.1]{WZ03} showed that this quantity coincides with the nonparametric likelihood ratio for iid nonnegative data with the null hypothesis that its mean is no more than $1$, and made the following conjecture. \begin{conj}
 \label{conjecture:gwz}

 Let $M_n(\lambda)$ be given in  \eqref{eq:M}.  \begin{enumerate}[(i)]
     \item  {\rm (Wang--Zhao).}
          For iid e-variables $E_1,\dots, E_n$, \begin{equation}\label{eq:gwz-conjecture-evalue}
        \P \left ( \sup_{\lambda \in [0, 1]} M_n(\lambda) \geq x \right ) \leq \frac{1}{x} \quad \mbox{for $x>0$}.
    \end{equation}
     \item {\rm (Gaffke).}   For independent   e-variables $E_1,\dots, E_n$, \eqref{eq:gwz-conjecture-evalue} holds.
 \end{enumerate}

\end{conj}
  \citet{WZ03} proved their version of \cref{conjecture:gwz} in the case where $n \in \{1, 2\}$ and left as an open problem whether the result holds for arbitrary $n \in \NN \setminus \{1,2\}$.  The \cite{gaffke2005three}  version of  \cref{conjecture:gwz} is stronger.   The above authors did not use the term ``e-variable'' in their work, but  \cref{conjecture:gwz} is   a mathematically equivalent re-statement of their explicit conjectures.

We obtain a surprisingly strong result: \cref{theorem:maximal-inequality-co-valid} settles  both versions of the conjecture in the positive, but with  a statistic that is larger than $\sup_{\lambda \in [0, 1]} M_n(\lambda)$ and under a dependence condition more general than independence;
we call e-variables satisfying it \emph{co-valid}.
Co-valid e-variables are special cases of sequential e-variables, which suffice for~\eqref{eq:ville}. 
The quantity $\sup_{\lambda \in [0, 1]} M_n(\lambda)$ represents the best outcome from choosing a constant betting strategy after observing all the data,
and therefore it is  potentially more powerful than \eqref{eq:ville} in some situations.
Moreover, in \cref{thm:average-mean-sympol}, we obtain a further generalization: the inequality \eqref{eq:gwz-conjecture-evalue} in \cref{conjecture:gwz} 
holds for nonnegative random variables that have average mean at most $1$; these random variables are called
compound e-values by \cite{ignatiadis2024asymptotic} when  testing a global null.

The logarithm of $\sup_{\lambda \in [0,1]} M_n(\lambda)$ has a dual form that results in it being called the KL-inf statistic in the multi-armed bandit literature~\citep{honda2010asymptotically,agrawal2021optimal}. These works use deterministic regret bounds to establish time-uniform concentration bounds for KL-inf that correct the $1/\alpha$ term in~\eqref{eq:ville} by factors that are \emph{polynomial in $n$}. Our result can thus be seen as providing \emph{regret-free concentration} for KL-inf at a fixed sample size $n$. 

The proof strategy that we develop throughout this paper is distinct from those typically encountered in the literature on e-variables. When e-variables are independent (or more generally, sequential), an (anytime-valid) p-value can be obtained by taking the product of those e-variables, recognizing that its cumulative product forms a nonnegative supermartingale, and applying \citeauthor{ville1939etude}'s [\citeyear{ville1939etude}] inequality. Indeed, this approach is now ubiquitous throughout the literature; see \citet[Section 7]{RW25}. However, it is not clear how one would employ such a technique in attempting to prove \cref{conjecture:gwz} for instance, because $ \lambda\mapsto \prod_{i=1}^n \left ( 1 - \lambda  +\lambda E_i \right )$  is not a (backward or forward) supermartingale. As we demonstrate in the proof of \cref{theorem:maximal-inequality-co-valid}, the random variables over which the maximum is taken nevertheless form a \emph{demi-supermartingale}. It is natural to wonder whether an analogue of Ville's inequality also holds for nonnegative demi-supermartingales, and in \cref{section:demi-ville}, we show that this is in fact the case, leading to a strict generalization of \citet{ville1939etude}.

The remainder of the paper is organized as follows. 
After establishing  Ville's inequality for nonnegative demi-supermartingales in
\Cref{section:demi-ville}, 
\Cref{section:co-validity} introduces co-valid e-variables and proves \cref{theorem:maximal-inequality-co-valid}, which settles the Wang--Zhao conjecture under this more general dependence condition. 
\Cref{section:bdd-means} applies the main result to confidence intervals for means of bounded random variables, derives regret-free concentration for the KL-inf statistic, studies the asymptotic widths of the resulting intervals, and establishes a more general KL-inf duality theorem. 
\Cref{sec:heterogeneous-means} extends the main results to the setting of heterogeneous means and compound e-values, with \cref{thm:average-mean-sympol} being the stronggest result in the paper. 
\Cref{sec:comp} presents a fast algorithm for computing the ``SymPol" statistic constructed from our theoretical results. 
The final section concludes, and some omitted proofs are collected in the appendix.

\section{Strengthening Ville's inequality with demi-supermartingales}
\label{section:demi-ville}

In this section, we recap the concepts of demi-martingales and demi-supermartingales \citep{rao2011associated}, culminating in a generalization of Ville's inequality for demi-supermartingales. These are then used in the proof of \cref{theorem:maximal-inequality-co-valid} and lead to new hypothesis tests with combined e-variables. Indeed, as far as we are aware,
this paper is the first to connect e-variables and p-values with demi-supermartingales. We hope that this leads to new ways of testing via the construction of new demi-supermartingales. 

For the definitions and results to follow, let $T \in \NN \cup \{\infty\}$ and let $\Tcal = \{0,1, \dots, T\}$   denote a ``time horizon'' where we use the convention that if $T = \infty$, then $\Tcal = \NN\cup \{0 \}$.

\begin{definition}[Demi-(super)martingales]
\label{def:demi}
An integrable process $(M_k)_{k\in \Tcal}$ is said to be a \emph{demi-martingale} (resp.~\emph{demi-supermartingale}) if 
\begin{equation}\label{eq:demi-def}
\E[(M_{k}-M_{k-1})g(M_0,\dots,M_{k-1})] \leq 0\quad\text{for all } \quad k \in \Tcal \setminus \{0\},
\end{equation}
and for every decreasing (resp.~decreasing and nonnegative) $g$ for which the expectation is defined.
\end{definition}

It is straightforward to check that all martingales are demi-martingales, all supermartingales are demi-supermartingales, and all demi-martingales are demi-supermartingales. Let us now derive an inequality which will serve as a drop-in replacement for Doob's optional stopping theorem in the proof of Ville's inequality for demi-supermartingales.

In what follows, we use the following convention for $M_\infty$: if $ T<\infty$, then $M_\infty=M_T$; 
if $T=\infty$, then 
$M_\infty$ is the a.s.~limit of $M_k$ as $k\to \infty$. 
For nonnegative demi-supermartingales, the above limit always exists, as guaranteed by  \citet[Corollary 8]{NW82}.

\begin{lemma}[A Doob-like inequality for nonnegative demi-supermartingales]\label{lemma:demi-doob}
    Let $(M_k)_{k \in \Tcal}$ be a nonnegative demi-supermartingale. Fix $x > 0$ and define
    the first passage time \begin{equation}\label{eq:first-passage}
      \tau = \inf \{ k \in \Tcal : M_k \geq x \}  
    \end{equation}
    with the convention that $\tau = \infty$ if the set is empty. Then, $\EE[M_\tau] \leq \EE[M_0]$.
\end{lemma}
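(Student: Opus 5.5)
Fix a finite horizon $n \in \Tcal$ and work with $M_{\tau\wedge n}$. I will show $\EE[M_{\tau\wedge n}] \le \EE[M_0]$ for every finite $n$, and then pass to the limit.

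For the finite step, telescope:
\begin{equation}
M_{\tau\wedge n} - M_0 = \sum_{k=1}^{n} (M_k - M_{k-1})\,\ind\{\tau \geq k\}.
\end{equation}
The event $\{\tau\ge k\}$ equals $\{M_0<x,\dots,M_{k-1}<x\}$. Its indicator is $g(M_0,\dots,M_{k-1}) = \prod_{j=0}^{k-1}\ind\{M_j<x\}$, which is a coordinatewise decreasing, nonnegative, bounded function of the past values. The defining inequality \eqref{eq:demi-def} of a demi-supermartingale therefore applies to each summand, giving
\begin{equation}
\EE[(M_k-M_{k-1})\ind\{\tau\ge k\}]\le 0 .
\end{equation}
All expectations are finite by integrability. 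Summing over $k$ yields $\EE[M_{\tau\wedge n}]\le \EE[M_0]$. This step is routine; the only observation needed is that the stopping indicator is a decreasing function of the path, which is exactly why the demi-supermartingale condition is adequate.

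If $T<\infty$, take $n=T$. Then $M_{\tau\wedge T}=M_\tau$, using the convention that $M_\infty=M_T$ on $\{\tau=\infty\}$, and the proof is complete.

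If $T=\infty$, let $n\to\infty$. On $\{\tau<\infty\}$, $M_{\tau\wedge n}\to M_\tau$ eventually. On $\{\tau=\infty\}$, $M_{\tau\wedge n}=M_n\to M_\infty$ almost surely, and this limit exists by the convergence result of \citet[Corollary 8]{NW82} cited above. So $M_{\tau\wedge n}\to M_\tau$ almost surely. Since the process is nonnegative, Fatou's lemma gives
\begin{equation}
\EE[M_\tau]\le \liminf_{n}\EE[M_{\tau\wedge n}]\le \EE[M_0].
\end{equation}

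The main point needing care is this passage to the limit. No uniform integrability is available, and optional stopping for demi-supermartingales is not standard. Both are avoided by combining nonnegativity, Fatou's lemma and the cited almost-sure convergence. One should also check that \cref{def:demi} permits indicator functions (bounded, decreasing, nonnegative) as choices of $g$; it does.
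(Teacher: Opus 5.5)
Your proposal is correct and follows essentially the same route as the paper: telescope $M_{\tau\wedge n}$, observe that $\mathbf 1\{\tau\ge k\}$ is a nonnegative, coordinatewise decreasing function of $(M_0,\dots,M_{k-1})$ so the demi-supermartingale condition kills each increment, then conclude by nonnegativity and Fatou's lemma. Your separate handling of the $T<\infty$ case and of the almost-sure convergence on $\{\tau=\infty\}$ makes explicit what the paper's proof leaves implicit.
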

\begin{proof}
For $k \in \Tcal$ 
write $M_{\tau \land k}$ as 
\[
M_{\tau \land k}
=
M_0+\sum_{j=1}^{k}(M_{j}-M_{j-1})\mathbf 1\{\tau>j-1\}.
\]
Note that $\1 \{\tau>k\} =
\mathbf 1\left\{\max_{0\le j\le k}M_j<x\right\}$
and that the map
\(
(y_0,\dots,y_k)
\mapsto
\mathbf 1\left\{\max_{0\le j\le k}y_j<x\right\}
\)
is decreasing and nonnegative. Using the fact that \((M_k)_{k \in \Tcal}\) is a
demi-supermartingale, it follows that $\E\left[(M_{k+1}-M_k)\1 \{ \tau > k \}\right]\leq 0$.
Taking expectations, we obtain  
\(
 \EE[M_{\tau \land k}] \leq \EE[M_0].
\)
Appealing to nonnegativity of $(M_k)_{k \in \Tcal}$ and Fatou's lemma, we conclude $\EE[M_\tau] \leq \EE[M_0]$.
\end{proof}
Notice that if $(M_k)_{k \in \Tcal}$ is a supermartingale with respect to a filtration $\Fcal$ and $\tau$ is an $\Fcal$-stopping time, then \cref{lemma:demi-doob} follows from Doob's optional stopping theorem. \cref{lemma:demi-doob} is essentially a result of \citet{hadjikyriakou2025doob} but for demi-supermartingales. We now use \cref{lemma:demi-doob} in the proof of Ville's inequality for nonnegative demi-supermartingales.

\begin{lemma}[Ville's inequality for nonnegative demi-supermartingales]
\label{lem:ville}
Let $(M_k)_{k \in \Tcal}$ be a nonnegative demi-supermartingale. Then
\begin{equation}
\label{eq:new-ville}
\P\left(\sup_{k \in \Tcal} M_k \ge x\right)  \le \frac{\E[M_0]}{x}\quad\text{for all $x > 0$.}
\end{equation}
\end{lemma}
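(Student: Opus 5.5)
The plan is to run the standard Markov-at-the-stopping-time argument, with \cref{lemma:demi-doob} taking the place of Doob's optional stopping theorem. Fix $x>0$ and let $\tau$ be the first passage time \eqref{eq:first-passage}. On the event $\{\tau<\infty\}$ we have $M_\tau \ge x$ by definition of $\tau$. Together with nonnegativity of $M$ this gives
\[
x\,\P(\tau<\infty) \le \E\left[M_\tau \1\{\tau<\infty\}\right] \le \E[M_\tau] \le \E[M_0],
\]
where the last step is \cref{lemma:demi-doob}. Hence $\P(\tau<\infty)\le \E[M_0]/x$.

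It remains to identify $\{\tau<\infty\}$ with $\{\sup_{k\in\Tcal}M_k\ge x\}$, and this is the only delicate step. The supremum can equal $x$ without being attained when $T=\infty$: for example, $M_k\uparrow x$ with $M_k<x$ for every $k$. In that case $\tau=\infty$ even though $\sup_k M_k\ge x$. To handle this, first apply the bound above at every level $y\in(0,x)$. We have $\{\sup_k M_k\ge x\}\subseteq\{\sup_k M_k> y\}\subseteq\{\tau_y<\infty\}$, where $\tau_y$ is the first passage time at level $y$, because a supremum strictly exceeding $y$ must be exceeded by some $M_k$. Therefore $\P(\sup_k M_k\ge x)\le \E[M_0]/y$ for every $y<x$. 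Letting $y\uparrow x$ yields \eqref{eq:new-ville}.

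When $T<\infty$ the supremum is a maximum over finitely many terms, so the identification is immediate and no limit is needed. The main obstacle is thus only this attainment issue. The substantive content, namely replacing optional stopping for a process with no filtration, is already supplied by \cref{lemma:demi-doob}. One point to check there is that the convention $M_\infty$ equal to the a.s.\ limit is used consistently: on $\{\tau=\infty\}$ we only need $M_\tau\ge 0$, and this holds because the a.s.\ limit of a nonnegative process is nonnegative.
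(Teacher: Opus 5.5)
Your proof is correct and follows the paper's strategy: Markov's bound at the first passage time $\tau$, with \cref{lemma:demi-doob} replacing optional stopping. The only difference is the boundary event $\{\tau=\infty,\ \sup_{k\in\Tcal}M_k=x\}$. The paper works directly at level $x$, noting that $M_\tau=M_\infty=x$ on this event (by the a.s.\ convergence of $M_k$), so that $x\1\{\sup_{k}M_k\ge x\}\le M_\tau$. You instead pass to levels $y<x$ and let $y\uparrow x$, which avoids identifying the limit there. Your route in fact only needs $y\,\P(\tau_y<\infty)\le\E[M_0]$, and this already follows from the finite-horizon bound $\E[M_{\tau_y\wedge k}]\le\E[M_0]$ in the proof of \cref{lemma:demi-doob}, without any convergence theorem.
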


\begin{proof}
Fix $x > 0$ and consider the first passage $\tau = \inf \{ k \in \Tcal : M_k \geq x\}$ as well as the indicator $\1 \{ \tau > k \} = \1 \{ \max _{0 \leq j \leq k} M_j < x \}$ as in the proof of \cref{lemma:demi-doob}. On the event $\{ \tau < \infty \}$, it holds that $M_\tau \geq x$. On the event $\{\tau = \infty ,~ \sup_{k\in \mathcal T} M_k=x\}$, we have $M_\tau=M_\infty=x $. Taking expectations and applying \cref{lemma:demi-doob}, we have
\begin{equation}
 x \P\left(\sup_{k \in \Tcal} M_k \ge x\right) =   x \P \left( \tau < \infty \right )
    +x\P\left(  \tau = \infty ,~ \sup_{k\in \mathcal T} M_k=x \right) 
    \leq \EE [M_{\tau}] \leq \EE [M_0],
\end{equation}
from which \eqref{eq:new-ville} follows.  
\end{proof}
Note that in discrete time, the usual form of Ville's inequality is stated for nonnegative supermartingales and yet, every supermartingale is a demi-supermartingale. Therefore, \cref{lem:ville} is a strict generalization of Ville's inequality in discrete time, noting that demi-supermartingales are not necessarily supermartingales. A simple example is given below.

\begin{example}
    Let $E_1$ and $E_2$ be independent random variables satisfying
 $
    \mathbb{P}(E_i=0)=\mathbb{P}(E_i=2)=1/2$  for $i\in\{1,2\}.
 $  Let 
 $ 
    A_0=1$, $
    A_1=(E_1+E_2)/2,$ and 
   $ A_2=E_1E_2.
 $ 
 \cref{lem:weighted-sympol}  verifies that $(A_k)_{k=0}^2$ is a nonnegative
demi-martingale by taking $n=2$ and $a_1=a_2=1$.
It is not  a supermartingale with respect to its
natural filtration, because  
 $
    \mathbb{E}[A_2 \mid A_1]
    =4
    >2
    =A_1$ on 
 $ \{A_1=2\}.
 $
Thus $(A_k)_{k=0}^2$ is a nonnegative demi-martingale
that is not a supermartingale.
\end{example}

Our main motivation to study the aforementioned inequalities for demi-supermartingales stems from the fact that certain processes involved in the construction of $\sup_{\lambda \in [0,1]} M_n(\lambda)$ (see \eqref{eq:M}) turn out to be demi-supermartingales but not supermartingales, and this holds under a \emph{co-valid} dependence structure. Let us now introduce this structure and derive a maximal inequality for $\sup_{\lambda \in [0, 1]} M_n(\lambda)$ now that we have access to \cref{lem:ville}.

\section{Co-valid e-variables and the SymPol inequality}\label{section:co-validity}
We now define co-valid e-variables alongside the standard notions of
e-variables and sequential  e-variables.
 Let $[n]= \{1,\dots,n\}$ for a positive integer $n$.  
\begin{definition} 
\label{def:1}
Fix a collection of probability distributions $\Pcal$, which represents the null hypothesis. A random variable $E$ is   an \emph{e-variable} (for $\Pcal$) if it is $[0,\infty]$-valued and $ \EE_\P [E] \leq 1$ for $\p\in \Pcal$. 
The e-variables  $E_1, \dots, E_n$ 
are \emph{sequential} if 
$$\E_\p[E_i\mid E_1,\dots,E_{i-1}]\le 1\quad \text{for all } i\in [n],~\p\in \Pcal.$$
The e-variables $E_1,\dots,E_n$
are   \emph{co-valid}  if 
$$\E_\p[E_i\mid  E_1,\dots,E_{i-1}, E_{i+1},\dots,E_n]\le 1 \quad\text{for all } i\in [n],~\p\in \Pcal.$$
\end{definition}

In the literature, e-variables and their realizations are also called \emph{e-values}, and we do not distinguish these terms in this paper.
 
For the results in this section, it suffices to consider a single probability distribution $\P$ so we write $\E$ in place of $\EE_\P$; see \cite{VW21}. Strictly speaking, sequential e-variables can be defined on a filtration different from the natural filtration of $(E_1,\dots,E_n)$; see for example \citet[Definition 7.19]{RW25}.
The simplified version of sequential e-variables in \cref{def:1} was formulated in \citet[Section 4]{VW21}. 
It is easy to see that the following chain of inclusions holds:
\begin{equation}\label{eq:set-inclusions}
\mbox{\{independent e-variables\}}
\subseteq
\mbox{\{co-valid e-variables\}}
\subseteq
\mbox{\{sequential  e-variables\}}.
\end{equation} 
To elaborate on the term ``co-valid" e-variables, imagine  that $n$ labs run experiments to test a hypothesis, and each lab  generates an e-variable, which is valid regardless of the results of the other labs.
By contrast, sequential (``sequentially valid") e-variables would arise in a situation where each lab runs the experiment one by one, and every lab can design their experiment to generate an e-variable based on  the results from previous labs. 
For a concrete example, 
co-valid e-variables may be conditionally independent and conditionally valid on a common factor $Z$, as illustrated by the following proposition.

\begin{proposition} 
\label{prop:1}
Suppose that
    $E_1,\dots,E_n$  are  conditionally independent 
    and conditionally valid
    e-variables on a common variable $Z$.  Then $E_1,\dots,E_n$ are co-valid e-variables.
\end{proposition}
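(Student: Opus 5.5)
The plan is to condition on $Z$ first, use conditional independence to drop the other e-variables, and then average out $Z$ with the tower property. Write $E_{-i}=(E_1,\dots,E_{i-1},E_{i+1},\dots,E_n)$. The hypotheses are read as follows.
\begin{itemize}
\item[(a)] $E_1,\dots,E_n$ are conditionally independent given $Z$.
\item[(b)] $\E[E_i\mid Z]\le 1$ almost surely for each $i\in[n]$.
\end{itemize}
The target is $\E[E_i\mid E_{-i}]\le 1$ almost surely for every $i$.

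The first step is to show that $\E[E_i\mid E_{-i},Z]=\E[E_i\mid Z]$ almost surely. This is the standard consequence of conditional independence given $Z$: $E_i$ and $E_{-i}$ are conditionally independent given $Z$. Since $E_i$ is $[0,\infty]$-valued and possibly not integrable, I will check the identity directly. Take a bounded measurable $h$ and a bounded measurable $g$. Conditional independence gives
\[
\E\bigl[(E_i\wedge m)\,h(E_{-i})\,g(Z)\bigr]=\E\bigl[\E[E_i\wedge m\mid Z]\,\E[h(E_{-i})\mid Z]\,g(Z)\bigr]=\E\bigl[\E[E_i\wedge m\mid Z]\,h(E_{-i})\,g(Z)\bigr].
\]
Functions of the form $h(E_{-i})g(Z)$ generate $\sigma(E_{-i},Z)$ by a $\pi$–$\lambda$ argument, so $\E[E_i\wedge m\mid E_{-i},Z]=\E[E_i\wedge m\mid Z]$. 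Letting $m\to\infty$ and using monotone convergence for conditional expectations of nonnegative variables gives the claim.

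Combining this with (b) gives $\E[E_i\mid E_{-i},Z]\le 1$ almost surely. The tower property, valid for nonnegative variables without integrability assumptions, then yields
\[
\E[E_i\mid E_{-i}]=\E\bigl[\E[E_i\mid E_{-i},Z]\bigm| E_{-i}\bigr]\le 1 .
\]
This holds for every $i$, so the $E_i$ are co-valid. Each $E_i$ is also an e-variable, since $\E[E_i]=\E[\E[E_i\mid Z]]\le 1$.

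If the null $\Pcal$ is composite, the argument is run under each $\P\in\Pcal$ separately. The only mild obstacle is the conditional-independence step, specifically the extended-valued integrability issue, and the truncation and monotone convergence handle it.
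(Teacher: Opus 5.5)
Your proof is correct and follows the same route as the paper. Both arguments use conditional independence to get $\E[E_i\mid \mathbf E_{-i},Z]=\E[E_i\mid Z]\le 1$ and then apply the tower property over $\mathbf E_{-i}$. The truncation and $\pi$--$\lambda$ verification is extra rigor the paper skips, and it is not strictly needed, since (b) already gives $\E[E_i]\le 1$, so $E_i$ is integrable.
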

\begin{proof} 
For $i\in [n]$,  let $\mathbf E_{-i}=(E_j)_{j\ne i}$. 
By conditional independence,
$
\E[E_i \mid Z, \mathbf E_{-i}] = \E[E_i\mid Z],
$
and  $\E[E_i\mid Z] \le 1$ because $E_i$ is an e-variable conditionally on $Z$.  
Therefore,  
$$
\E[E_i\mid \mathbf E_{-i}]
= \E \left[\E[E_i\mid Z,\mathbf E_{-i}] \mid \mathbf E_{-i} \right]
= \E \left[\E[E_i\mid Z] \mid \mathbf E_{-i} \right] \le 1. 
$$  
Thus $E_1,\dots,E_n$ are co-valid e-variables.
\end{proof} 

With \cref{def:1} in mind, we are ready to state the main result of this section, the ``SymPol'' (\textbf{sym}metric \textbf{pol}ynomials) inequality for co-valid e-variables. 
For a vector $\bx \in \R^n$, let $k\in [n]$ and define $S_k(\mathbf x)$ as the 
elementary symmetric polynomial of degree $k$. That is,
$S_k(\bx)$ is the sum of all terms $\prod_{i\in I}x_i$
for subsets $I\subseteq [n]$ with size $k$.
We let $A_k(\mathbf x)$ be the average of the terms $\prod_{i\in I}x_i$ in $S_k(\bx)$:
\begin{equation}
\label{eq:sympol}
A_k(\mathbf{x)}=\frac{S_k(\mathbf x)}{\binom{n}{k}}
=\frac{1}{\binom{n}{k}}\sum_{I\subseteq [n],~|I|=k}\left(\prod_{i\in I}x_i \right),
\end{equation} 
with the convention $A_0=1$. The following theorem establishes that when considering an $n$-vector of co-valid e-variables $\bE$, the process $(A_k(\bE))_{k=0}^n$ forms a demi-supermartingale, and that it is an upper bound on the supremum of $\prod_{i=1}^n (1-\lambda + \lambda E_i )$ over $\lambda \in [0,1]$. 
The proof relies on a more general result (\cref{lem:weighted-sympol}), presented  in \cref{sec:heterogeneous-means}, that establishes the demi-supermartingality for a larger class of stochastic processes including $(A_k(\bE))_{k=0}^n$.
 
\begin{theorem} 
[SymPol inequality]
\label{theorem:maximal-inequality-co-valid}
Let $
\mathbf E=(E_1, \dots, E_n )$ be a vector of co-valid e-variables. 
Then, $(A_k(\bE))_{k=0}^n$ forms a nonnegative demi-supermartingale and hence
\begin{equation}
\label{eq:th1-1}
\p\left(\max_{0\le k\le n} A_k(\mathbf E) \ge x\right) \le \frac{1}{x}, \quad \mbox{ for all } x > 0.
\end{equation}
Furthermore, it holds that
\begin{equation}
\label{eq:th1-2} 
\sup_{\lambda \in [0,1]} \prod_{i=1}^n (1-\lambda + \lambda E_i )\le \max_{0\le k\le n} A_k(\mathbf E).
\end{equation}
\end{theorem}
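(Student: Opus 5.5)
The plan has three parts: prove that $(A_k(\bE))_{k=0}^n$ is a nonnegative demi-supermartingale, deduce \eqref{eq:th1-1} from \cref{lem:ville}, and prove \eqref{eq:th1-2} by expanding the product. Nonnegativity of $A_k(\bE)$ is immediate because each $E_i\ge 0$.

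\textbf{Step 1 (two averaging identities).} For $i\in[n]$ write $\bE_{-i}=(E_j)_{j\ne i}$, and let $A^{(n-1)}_{m}$ denote the averaged elementary symmetric polynomial in $n-1$ variables. I will use two identities, both obtained by counting subsets:
\begin{equation*}
A_{k}(\bE)=\frac1n\sum_{i=1}^n E_i\,A^{(n-1)}_{k-1}(\bE_{-i}),\qquad A_{k-1}(\bE)=\frac1n\sum_{i=1}^n A^{(n-1)}_{k-1}(\bE_{-i}).
\end{equation*}
For the first, note that $\sum_i E_iS_{k-1}(\bE_{-i})=kS_k(\bE)$ and $n\binom{n-1}{k-1}=k\binom nk$. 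For the second, each $(k-1)$-subset of $[n]$ lies in $n-k+1$ of the sets $[n]\setminus\{i\}$, so $\sum_i S_{k-1}(\bE_{-i})=(n-k+1)S_{k-1}(\bE)$; moreover $n\binom{n-1}{k-1}=(n-k+1)\binom{n}{k-1}$.

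Subtracting the two identities and writing $B_i=A^{(n-1)}_{k-1}(\bE_{-i})\ge 0$ gives
\begin{equation*}
A_k(\bE)-A_{k-1}(\bE)=\frac1n\sum_{i=1}^n (E_i-1)B_i .
\end{equation*}
Here $B_i$ is a function of $\bE_{-i}$ only.

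\textbf{Step 2 (demi-supermartingale inequality; the main step).} Fix $k\in[n]$ and a decreasing nonnegative $g$. It suffices to show, for each $i$,
\begin{equation*}
\E\bigl[(E_i-1)B_i\,g(A_0(\bE),\dots,A_{k-1}(\bE))\bigr]\le 0 .
\end{equation*}
Every $A_j(\bE)$ is coordinatewise increasing on $[0,\infty]^n$. Hence, with $\bE_{-i}$ fixed, the map $h_i(e)=g(A_0,\dots,A_{k-1})$ evaluated at $E_i=e$ is decreasing in $e$. This gives the pointwise inequality $(E_i-1)\bigl(h_i(E_i)-h_i(1)\bigr)\le 0$. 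The quantity $h_i(1)$ depends only on $\bE_{-i}$, so we get
\begin{equation*}
\E[(E_i-1)B_i h_i(E_i)]\le \E[(E_i-1)B_ih_i(1)]=\E\bigl[B_ih_i(1)\bigl(\E[E_i\mid \bE_{-i}]-1\bigr)\bigr]\le 0 .
\end{equation*}
The last inequality uses $B_ih_i(1)\ge0$ (this is where nonnegativity of $g$ enters) together with co-validity. Summing over $i$ yields \eqref{eq:demi-def}.

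I expect the main obstacle to be justifying these expectations: integrability, and the possibility that some $E_i=\infty$. I would first handle this by truncating, replacing $g$ by $g\wedge c$ and $E_i$ by $E_i\wedge N$ where needed, or by restricting to the case in which the expectations are defined, as \cref{def:demi} allows. In full generality this is the content of \cref{lem:weighted-sympol}, which I may also simply invoke with equal weights.

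\textbf{Step 3 (conclusions).} Since $A_0=1$, \cref{lem:ville} with $\Tcal=\{0,\dots,n\}$ gives \eqref{eq:th1-1} directly. For \eqref{eq:th1-2}, expand the product as a polynomial in $\lambda$:
\begin{equation*}
\prod_{i=1}^n(1-\lambda+\lambda E_i)=\sum_{k=0}^n\lambda^k(1-\lambda)^{n-k}S_k(\bE)=\sum_{k=0}^n\binom nk\lambda^k(1-\lambda)^{n-k}A_k(\bE).
\end{equation*}
This is a convex combination of $A_0(\bE),\dots,A_n(\bE)$ with binomial weights, so it is at most $\max_k A_k(\bE)$ for every $\lambda\in[0,1]$. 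Taking the supremum over $\lambda$ gives \eqref{eq:th1-2}.
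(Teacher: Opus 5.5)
Your proof is correct. Its outer structure (demi-supermartingale property, then \cref{lem:ville}, then the binomial expansion for \eqref{eq:th1-2}) is the same as the paper's; the difference is in how the demi-supermartingale property is obtained.

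The paper invokes \cref{lem:weighted-sympol} with $a_1=\cdots=a_n=1$. That proof writes the increment as $\sum_{|S|=k,\,i\notin S}\gamma_k(S,i)E_S(E_i-1)$, using a nested coupling $I_k\subseteq I_{k+1}$ of the size-$k$ and size-$(k+1)$ conditionals of a product Bernoulli law (stochastic covering for strong Rayleigh measures). It then applies a conditional Chebyshev association inequality.

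Your Step 1 identity is that same decomposition specialized to equal weights. There one can take the coupling to be ``uniform $k$-set plus a uniformly chosen new element'', so $\gamma_k(S,i)=1/\bigl(\binom nk(n-k)\bigr)$, and grouping by $i$ gives your formula with the index shifted. You get it by double counting, with no external input. Your pointwise bound $(E_i-1)\bigl(h_i(E_i)-h_i(1)\bigr)\le0$ is a self-contained replacement for the association inequality. It gives $h_i(1)\bigl(\E[E_i\mid\bE_{-i}]-1\bigr)$ instead of $\E[G\mid\bE_{-i}]\bigl(\E[E_i\mid\bE_{-i}]-1\bigr)$ as the nonpositive upper bound.

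The paper's heavier route buys the weighted statement, which is genuinely needed for \cref{thm:average-mean-sympol} (with weights $a_i=\mu_i/\bar\mu$). For the present theorem, your argument is simpler.

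On your ``main obstacle'', no truncation of the $E_i$ is needed. For nonnegative variables, $\E[E_iB_i]=\E\bigl[B_i\,\E[E_i\mid\bE_{-i}]\bigr]\le\E[B_i]$. Your identities then give $\E[A_k(\bE)]\le\E[A_{k-1}(\bE)]\le\cdots\le1$ by induction. Hence $\E[B_i]\le n\,\E[A_{k-1}(\bE)]<\infty$ and $\E[E_iB_i]<\infty$. With bounded $g$ (indicators are all that \cref{lemma:demi-doob} uses), every expectation in Step 2 is then finite.
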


\begin{proof}
    The proof of \eqref{eq:th1-1} follows by combining two results: the fact that $(A_k)_{k=0}^n$ is a nonnegative demi-supermartingale (taking $a_1=\cdots=a_n=1$ in  \cref{lem:weighted-sympol}) and   Ville's inequality for nonnegative demi-supermartingales (\cref{lem:ville}).
    The inequality \eqref{eq:th1-2} follows from standard algebra:  For every $\lambda\in[0,1]$, we have
\[
\prod_{i=1}^n\bigl(\lambda E_i+(1-\lambda)\bigr)
=\sum_{k=0}^n \binom{n}{k}\lambda^k(1-\lambda)^{n-k}A_k.
\]
Since the coefficients $\binom{n}{k}\lambda^k(1-\lambda)^{n-k}$ are nonnegative and sum to $1$, we obtain  \eqref{eq:th1-2}.
\end{proof}
Note that taking \eqref{eq:th1-1}--\eqref{eq:th1-2} together, we have that under co-validity (which includes the independent case),
\begin{equation}
\label{eq:corollary-conjecture}
\P\left(\sup_{\lambda \in [0,1]} \prod_{i=1}^n (1 - \lambda + \lambda E_i) \ge x\right) \le \frac{1}{x}, \quad \mbox{ for all } x > 0,
\end{equation}
providing a confirmation of \cref{conjecture:gwz} both with a sharper test statistic and under weaker assumptions than those conjectured.

Note that \eqref{eq:corollary-conjecture} cannot be shown by applying Ville's inequality to the process $\lambda \mapsto \prod_{i=1}^n (1 - \lambda + \lambda E_i)  $. 
If one equips it with its natural filtration, then after observing the process on any nontrivial interval the
entire polynomial, and hence its future, is measurable; supermartingality would therefore impose pathwise
nonincrease from every positive lambda. This generally fails.

\begin{remark} 
    As is apparent in the proof of \cref{theorem:maximal-inequality-co-valid}, if all of the e-variables $E_1, \dots, E_n$ are \emph{exact}, meaning that
    \begin{equation}
        \EE[E_i] = 1 \quad\text{for each $i \in [n]$},
    \end{equation}
    then $(A_k(\bE))_{k=0}^n$ forms a nonnegative demi-martingale. While this fact does not affect the form of the downstream tests and p-values (as is discussed shortly), it may lead to sharper inference in practice. Indeed, \cref{section:bdd-means} revisits the bounded mean estimation problem where confidence intervals are formed by inverting tests based on demi-martingales and exact e-variables.
\end{remark}

Given the set inclusions in \eqref{eq:set-inclusions}, it is natural to wonder whether it can be shown that the results of \cref{theorem:maximal-inequality-co-valid} fail to hold under sequential dependence of the e-variables. The following example demonstrates that this is in fact the case. 
\begin{example} 
\label{ex:counter1}
Take $n=2$. Let 
$E_1$ and $E_2$ be such that $\p(E_1=2)=\p(E_1=0)=1/2$,
and  
$\p(E_2=1\mid E_1=2)=1$,
and $\p(E_2=8 \mid E_1=0)=1-\p(E_2=0 \mid E_1=0)=1/8$. 
It is clear that $\E[E_1]=1$ and $\E[E_2\mid E_1]=1$, and hence they are sequential e-variables.
It is straightforward to compute $$\p\left(\sup_{\lambda \in [0,1]} \prod_{i=1}^2 (\lambda E_i + (1-\lambda) ) \ge 2\right)
=\p\big( (E_1,E_2) \in \{(2,1),(0,8)\}\big)
=\frac{9}{16}>\frac{1}{2},$$
which rules out \eqref{eq:corollary-conjecture}.
\end{example}

\cref{theorem:maximal-inequality-co-valid} immediately yields a new method for deriving hypothesis tests and p-values using e-variables.
In particular, let $\mathbf E=(E_1,\dots,E_n)$ be a vector of co-valid e-variables and define
\begin{equation}
    P_n = \inf_{\lambda \in [0,1]} \prod_{i=1}^n \frac{1}{(1-\lambda + \lambda E_i)} \quad \text{and} \quad \widebar P_n = \min_{0 \leq k \leq n} \frac{1}{A_k(\bE)}.
\end{equation}
By \cref{theorem:maximal-inequality-co-valid}, $P_n$ and $\widebar P_n$ are both p-values, i.e.,
\begin{equation}
    \forall \alpha \in (0, 1),\quad \P \left ( P_n \leq \alpha \right ) \leq \P \left ( \widebar P_n \leq \alpha \right ) \leq \alpha.
\end{equation}
Consequently, for any desired significance level, both $\1 \{ P_n \leq \alpha \}$ and $\1 \{ \widebar P_n \leq \alpha \}$ form level-$\alpha$ hypothesis tests for the null that the components of $\bE$ are co-valid e-variables. The computational aspects of these different tests are discussed in \cref{sec:comp}. 

An important class of hypothesis tests consists of those that can be inverted to form confidence sets for the parameters being tested. In the following section, we explore an instantiation of such tests for the purposes of deriving confidence intervals for means of bounded random variables.

\section{Confidence sets for means of bounded random variables}\label{section:bdd-means}

\subsection{Problem setup}
The construction of confidence intervals for means of bounded random variables is a foundational problem in statistical inference. It is a key component in several methodological problems including the derivation of generalization bounds \citep{maurer2009empirical}, prediction-powered inference \citep{angelopoulos2023prediction}, risk-limiting election audits \citep{stark2008conservative,stark2020sets,stark2023alpha,waudby2021rilacs}, risk-controlling conformal prediction sets \citep{bates2021distribution}, sensitivity analysis of individual treatment effects \citep{jin2023sensitivity}, among other applications.

Making matters concrete, let $\mathbf X=(X_1, \dots, X_n)$ be a random vector and let $\Pcal$ be the collection of probability measures for which $X_1, \dots, X_n$ are independent,\footnote{The independence assumption can be relaxed to a rescaled version of co-validity as in \cref{theorem:maximal-inequality-co-valid}.} are supported on $[0,1]$, and have the common mean  $\mu = \mu(\P) = \EE_\P[X_i]$ for all $i \in [n]$.
Denote their sample mean  by $\overline X_n$. 
For a fixed $\alpha \in (0, 1)$, we are interested in deriving a confidence interval for $\mu$, which is a random set $C_n(\alpha)$, formed from $X_1, \dots, X_n$, for which 
\begin{equation}
    \inf_{\P \in \Pcal} ~ \P \left ( \mu(\P) \in C_n(\alpha) \right ) \geq 1-\alpha.
\end{equation}
With access to \cref{theorem:maximal-inequality-co-valid}, we now study the lower and upper confidence intervals formed when they are constructed using
\begin{equation}
    W_n^\brackell(m) = \sup_{\gamma \in [0,1/m]} \prod_{i=1}^n (1 + \gamma (X_i - m))
\end{equation}
and 
\begin{equation}
    W_n^\bracku(m) = \sup_{\gamma \in [-1/(1-m),0]} \prod_{i=1}^n (1 + \gamma (X_i - m)),
\end{equation}
respectively. Throughout we set $x/0=\infty$ for $x>0$, and the interval $[0,1/m]$ for $m=0$ is interpreted as $[0,\infty)$.
We next state the conclusion formally in the following corollary.
\begin{corollary}\label{corollary:supgamma-ci}
    Fix $\alpha \in (0, 1)$ and define the lower and upper intervals $L_n(\alpha)$ and $U_n(\alpha)$ with their endpoints:
    \begin{equation}
        L_n = \inf \left \{ m \in [0, 1]  : W_n^\brackell(m) < \frac{2}{\alpha} \right \}\quad \text{and}\quad U_n = \sup \left \{ m \in [0, 1]  : W_n^\bracku(m) < \frac{2}{\alpha} \right \}.
    \end{equation}
    Let $C_n= (L_n , U_n )$ 
if $\overline{X}_n\not\in\{0,1\}$, $C_n = [ L_n ,U_n )$  if $\overline{X}_n=0$,
 and  $C_n =(L_n , U_n ]$ if $\overline{X}_n=1$.  
 Then it holds that $C_n$ is a valid confidence interval for $\mu\in[0,1]$ that contains the sample mean $\overline X_n$. 
\end{corollary}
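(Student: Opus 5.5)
The plan is to apply \cref{theorem:maximal-inequality-co-valid} (through its consequence \eqref{eq:corollary-conjecture}) twice at the true mean $\mu$, once for each tail, and combine the two bounds with a union bound. Monotonicity and continuity of $m\mapsto W_n^\brackell(m)$ and $m\mapsto W_n^\bracku(m)$ then turn the two test events into the statement that $\mu\in C_n$.

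\emph{Step 1 (lower tail).} Suppose $\mu\in(0,1]$ and set $E_i=X_i/\mu$. These are independent, nonnegative, and satisfy $\EE[E_i]=1$, so they are co-valid e-variables by \eqref{eq:set-inclusions}. Substitute $\gamma=\lambda/\mu$ with $\lambda\in[0,1]$. Then $1+\gamma(X_i-\mu)=1-\lambda+\lambda E_i$, so $W_n^\brackell(\mu)=\sup_{\lambda\in[0,1]}\prod_i(1-\lambda+\lambda E_i)$. By \eqref{eq:corollary-conjecture}, $\P(W_n^\brackell(\mu)\ge 2/\alpha)\le\alpha/2$. If $\mu=0$, then $X_i=0$ almost surely and every factor equals $1$, so $W_n^\brackell(0)=1<2/\alpha$ deterministically.

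\emph{Step 2 (upper tail).} Symmetrically, take $E_i=(1-X_i)/(1-\mu)$ and $\gamma=-\lambda/(1-\mu)$. This gives $\P(W_n^\bracku(\mu)\ge 2/\alpha)\le\alpha/2$, and $\mu=1$ is trivial in the same way as $\mu=0$. A union bound then gives, with probability at least $1-\alpha$, both $W_n^\brackell(\mu)<2/\alpha$ and $W_n^\bracku(\mu)<2/\alpha$.

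\emph{Step 3 (structure of the sets).} I will show that these two inequalities imply $\mu\in C_n$, and that $\overline X_n\in C_n$ always.
\begin{enumerate}[(a)]
\item For $\gamma\ge 0$, each factor $1+\gamma(X_i-m)$ is nonnegative on $[0,1/m]$ and decreasing in $m$, and the range $[0,1/m]$ shrinks as $m$ grows. Hence $W_n^\brackell$ is nonincreasing in $m$, and $\{m:W_n^\brackell(m)<2/\alpha\}$ is an up-set in $[0,1]$. Likewise $W_n^\bracku$ is nondecreasing, so its sublevel set is a down-set.
\item On $(0,1]$, reparametrise by $\lambda=\gamma m\in[0,1]$. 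Then $W_n^\brackell(m)=\max_{\lambda\in[0,1]}\prod_i(1-\lambda+\lambda X_i/m)$ is a maximum over a compact set of a function jointly continuous in $(\lambda,m)$, hence continuous in $m$. So the sublevel set is relatively open in $(0,1]$.
\item Consequently, $W_n^\brackell(\mu)<2/\alpha$ with $\mu\in(0,1)$ forces $\mu>L_n$. The analogous argument gives $\mu<U_n$ for $\mu\in(0,1)$.
\item For the sample mean, $\log$-concavity (or AM--GM) gives $\prod_i(1+\gamma(X_i-\overline X_n))\le \exp\bigl(\gamma\sum_i(X_i-\overline X_n)\bigr)=1$. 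Hence $W_n^\brackell(\overline X_n)=W_n^\bracku(\overline X_n)=1<2/\alpha$, so $L_n\le\overline X_n\le U_n$ with strict inequalities in the interior case by (c).
\end{enumerate}
This shows $C_n$ is a genuine interval containing $\overline X_n$.

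\emph{Main obstacle: boundary bookkeeping.} Most of the care goes into the endpoints, which is exactly why $C_n$ is half-closed when $\overline X_n\in\{0,1\}$.
\begin{itemize}
\item For $m=0$ the range $[0,\infty)$ is unbounded. Then $W_n^\brackell(0)=\infty$ unless all $X_i=0$, i.e.\ unless $\overline X_n=0$.
\item So when $\overline X_n>0$, we get $L_n>0$ by continuity near $0$ and monotonicity. The value $\mu=0$ has probability zero of being relevant, since $\mu=0$ forces $\overline X_n=0$.
\item When $\overline X_n=0$, the point $0$ lies in the lower sublevel set, so $L_n=0$ and the closed left endpoint $[L_n$ is required to cover $\mu=0$.
\item Near $m=1$ for the lower side, and symmetrically at $m=0$ for the upper side, one checks directly that the cases $\mu=1$ and $\overline X_n=1$ match the closed right endpoint in the same way.
\end{itemize}
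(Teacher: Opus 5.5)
Your proposal is correct and follows the paper's route. The paper's one-line argument is exactly your substitution $\lambda=m\gamma$ (resp.\ $\lambda=-(1-m)\gamma$) at $m=\mu$, followed by the SymPol inequality at threshold $2/\alpha$ for each tail and a union bound; your continuity, sample-mean and endpoint bookkeeping supplies the details the paper leaves implicit.
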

\cref{corollary:supgamma-ci} is immediate once we identify $\lambda \in [0, 1]$ with $m \gamma$ in $L_n$ and with $-(1-m) \gamma$ in $U_n$ and apply \cref{theorem:maximal-inequality-co-valid}.
\begin{remark} \label{remark:computational-feasibility}
It is straightforward to compute each of the suprema in the definitions of $L_n$ and $U_n$ for any $m \in[0,1]$ via standard convex optimization (or root-finding) routines. It is less obvious, however, whether the endpoints $L_n$ and $U_n$ themselves are computable. Note that $W_n^\brackell(m)$ can be written as
  \begin{equation}\label{eq:fixed lambda product}
    W_n^\brackell(m) = \sup_{\lambda \in [0, 1]}\prod_{i=1}^n \left ( 1 - \lambda + \lambda \frac{X_i}{m} \right ),
  \end{equation}
   and the right-hand side is clearly a supremum over functions that are convex in $m\in[0,1]$. It follows that $(L_n, 1]$ is the intersection of infinitely many intervals, one for each $\lambda \in [0, 1]$. Consequently, $W_n^\brackell(m)$ (and through a similar argument, $W_n^\bracku(m)$) are quasi-convex in $m \in [0, 1]$ so $L_n$ and $U_n$ can be computed via standard root-finding algorithms.
\end{remark}

In the same way that $C_n$ is the result of appropriately inverting the tests given by $\1 \{ W_n^\brackell(m) \geq 2/\alpha \}$  and $\1 \{ W_n^\bracku(m) \geq 2/\alpha \}$, it is theoretically possible to do so with the corresponding statistics given by $\max_{0\leq k \leq n} A_k(\bE)$ in \cref{theorem:maximal-inequality-co-valid}. However, it is not obvious if or how one could find the endpoints of such a set (whether theoretically or practically). The following proposition shows that the resulting confidence interval has a closed-form expression and does not require any root-finding or optimization to solve for the endpoints. We refer to them as SymPol confidence intervals.
\begin{proposition}[SymPol confidence intervals]\label{proposition:sympol-ci}
Consider the function $A_k $ as in \eqref{eq:sympol} and define
\begin{align}
 L_n^{\SP} &=
\max_{1\le k\le n}
\left(
\frac{\alpha}{2}\,A_k(X_1,\dots,X_n)
\right)^{1/k}, \text{ and}\\
U_n^{\SP} &= 1-
\max_{1\le k\le n}
\left(
\frac{\alpha}{2}\,A_k(1-X_1,\dots,1-X_n)
\right)^{1/k}.
\end{align}
Let $C_n^{\rm SP}= (L_n^{\SP}, U_n^{\SP})$ 
if $\overline{X}_n\not\in\{0,1\}$, $C_n^{\rm SP}= [ L_n^{\SP},U_n^{\SP})$  if $\overline{X}_n=0$,
 and  $C_n^{\rm SP}=(L_n^{\SP}, U_n^{\SP}]$ if $\overline{X}_n=1$.  
Then $C_n^{\rm SP}$ is a $(1-\alpha)$-confidence interval for the mean:
\begin{equation}
\inf_\Pin \P \left ( \mu(\P) \in C_n^{\rm SP} \right ) \geq 1-\alpha.
\end{equation}
Furthermore, $L_n \leq L_n^{\SP}\le\widebar{X}_n\le U_n^{\SP} \leq U_n$. 
\end{proposition}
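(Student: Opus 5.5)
The plan is to read $C_n^{\rm SP}$ as the inversion of the test $\ind\{\max_{0\le k\le n}A_k(\bE)\ge 2/\alpha\}$ from \cref{theorem:maximal-inequality-co-valid}. We apply it once to the lower-tail e-variables $E_i^{(\ell)}(m)=X_i/m$ and once to the upper-tail e-variables $E_i^{(u)}(m)=(1-X_i)/(1-m)$, and combine the two by a union bound at level $\alpha/2$ each.

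\emph{Step 1 (closed form of the acceptance region).} By homogeneity of elementary symmetric polynomials,
\[
A_k(X_1/m,\dots,X_n/m)=A_k(\mathbf X)/m^k .
\]
Since $A_0=1<2/\alpha$, we have $\max_{0\le k\le n}A_k(\mathbf X)/m^k\ge 2/\alpha$ if and only if $m^k\le \frac{\alpha}{2}A_k(\mathbf X)$ for some $k\ge1$. This is equivalent to $m\le (\frac{\alpha}{2}A_k(\mathbf X))^{1/k}$ for some $k\ge 1$, that is, to $m\le L_n^{\SP}$. Hence the lower test rejects $m\in(0,1]$ exactly when $m\le L_n^{\SP}$. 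Symmetrically, applying the same computation to $1-X_i$, the upper test rejects $m\in[0,1)$ exactly when $m\ge U_n^{\SP}$.

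\emph{Step 2 (coverage).} Fix $\P\in\Pcal$ with $\mu=\mu(\P)\in(0,1)$. Then the $X_i/\mu$ are independent, hence co-valid, with mean $1$. By \eqref{eq:th1-1} and Step 1, $\P(\mu\le L_n^{\SP})\le\alpha/2$, and likewise $\P(\mu\ge U_n^{\SP})\le \alpha/2$. The union bound gives coverage of the open interval $(L_n^{\SP},U_n^{\SP})$, which is contained in $C_n^{\rm SP}$.

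The boundary cases need separate handling, and I expect this to be the only fiddly point. If $\mu=0$, then $X_i=0$ a.s., so $\overline X_n=0$ and all $A_k(\mathbf X)=0$. Thus $L_n^{\SP}=0$ and the interval is $[0,U_n^{\SP})$. Moreover $A_k(1-\mathbf X)=1$, so $U_n^{\SP}=1-(\alpha/2)^{1/n}\cdots>0$; in fact $U_n^{\SP}\ge 1-\max_k(\alpha/2)^{1/k}>0$. Hence $\mu=0\in C_n^{\rm SP}$ surely. The case $\mu=1$ is symmetric.

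\emph{Step 3 (ordering).} For $L_n^{\SP}\le\overline X_n$, use Maclaurin's inequality $A_k(\mathbf x)^{1/k}\le A_1(\mathbf x)=\overline x$ for nonnegative $\mathbf x$, together with $(\alpha/2)^{1/k}\le1$. The bound $\overline X_n\le U_n^{\SP}$ follows in the same way by applying Maclaurin's inequality to $1-\mathbf X$. For $L_n\le L_n^{\SP}$, use \eqref{eq:th1-2} with $E_i=X_i/m$ via the representation \eqref{eq:fixed lambda product}:
\[
W_n^{\brackell}(m)\le \max_k A_k(\mathbf X)/m^k .
\]
So for every $m\in(L_n^{\SP},1]$, Step 1 gives $W_n^{\brackell}(m)<2/\alpha$, which means $m$ lies in the set defining $L_n$. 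Hence $L_n\le m$, and letting $m\downarrow L_n^{\SP}$ yields $L_n\le L_n^{\SP}$; here we use $L_n^{\SP}\le \overline X_n\le 1$ so that the range is nonempty. The edge case $L_n^{\SP}=1$ forces $\overline X_n=1$ and is trivial. The bound $U_n^{\SP}\le U_n$ follows identically.
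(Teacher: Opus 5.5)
Your proposal is correct and follows the paper's proof essentially step for step. Homogeneity of $A_k$ turns the SymPol test at $m$ into the closed-form rejection regions $\{m\le L_n^{\SP}\}$ and $\{m\ge U_n^{\SP}\}$; \cref{theorem:maximal-inequality-co-valid} with a union bound at level $\alpha/2$ per tail gives coverage, with $\mu\in\{0,1\}$ handled directly; Maclaurin's inequality gives the inner inequalities; and the deterministic bound \eqref{eq:th1-2} places the SymPol acceptance set inside the optimized-product one, which gives the outer inequalities. One small remark: your edge case $L_n^{\SP}=1$ never actually occurs, since $L_n^{\SP}\le(\alpha/2)^{1/n}\,\overline X_n<1$.
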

The proof of \cref{proposition:sympol-ci} is given in \cref{proof:proposition:sympol-ci}.
The statistics given by $W_n^\brackell(m)$ and $W_n^\bracku(m)$ have appeared in different contexts in the literature on best-arm identification and on sequential estimation. Let us now make these connections explicit. 

\subsection{A regret-free KL-inf confidence interval}
A fundamental quantity that arises in the multi-armed bandit literature is a certain infimum over Kullback--Leibler (KL) divergences that is colloquially referred to as ``$\KLinf$''. See \citet{burnetas1996optimal}, \citet{honda2010asymptotically,honda2015non}, and \citet{agrawal2021regret,agrawal2021optimal}. Let us recall and discuss this quantity formally here.
\begin{definition}[$\KLinf$]
  Let $\mathcal S$ be a collection of probability distributions (to be thought of as a composite null hypothesis) and $\Q$ a single probability distribution (to be thought of as a point alternative). The quantity $\KLinf(\Q \| \mathcal S)$ is defined as
  \begin{equation}
    \KLinf(\Q \| \mathcal S) = \inf_{\P \in \mathcal S} \KL(\Q \| \P),
  \end{equation}
  where $\KL(\Q \| \P)$ is the KL divergence between $\Q$ and $\P$, defined as \begin{equation}
\label{eq:kl-def1}
\mathrm{KL}(\Q\Vert \p) =  \E_\Q\left[\log \frac{\d \Q}{\d\p} \right]  \mbox{ if $\Q\ll\p$, and   $\infty$ otherwise. }
\end{equation}

\end{definition}
\citet{honda2010asymptotically} derived a duality result that relates $W_n^\brackell(m)$ in~\eqref{eq:fixed lambda product} to the exponential of the $\KLinf$ between the empirical measure of the data $X_1, \dots, X_n$ and the null distributions with mean  at most $m$, recalled below. 

\begin{fact}[Duality between wealth and KL-inf]\label{fact:honda-takemura}
For $\mathbf{x} = (x_1, \dots, x_n) \in [0, 1]^n$, denote their empirical measure by $\widehat \Q_n(\mathbf{x}) := \frac{1}{n}\sum_{i=1}^n \delta_{x_i}$.
  For a fixed $m \in [0, 1]$, let $$\Pcal_m^\leq := \left\{ \Q   \in \Pcal([0,1]) : \int x \d \Q  (x) \leq m  \right\},$$
  where $\Pcal([0,1]) $ 
  is the set of distributions  supported on $[0,1]$.
  Then,   $\log     W_n^\brackell(m)    = n\KLinf(\widehat \Q_n(\mathbf{x}) \| \Pcal_m^\leq)$.
\end{fact}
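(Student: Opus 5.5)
This is a finite-dimensional convex duality, and I would prove it by computing both sides as the same one-dimensional concave maximization. Write $W_n^\brackell(m)=\sup_{\gamma\in[0,1/m]}\prod_{i=1}^n(1+\gamma(x_i-m))$, so that
\[
\log W_n^\brackell(m)=\sup_{\gamma\in[0,1/m]}\sum_{i=1}^n\log(1+\gamma(x_i-m)).
\]
The target is to show $\KLinf(\widehat\Q_n\|\Pcal_m^\le)=\sup_{\gamma\in[0,1/m]}\frac1n\sum_i\log(1+\gamma(x_i-m))$. I would establish weak duality ($\ge$) by a change-of-measure inequality and strong duality ($\le$) by exhibiting an explicit minimizing $\P$.

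\emph{Weak duality.} Fix $\P\in\Pcal_m^\le$ with $\widehat\Q_n\ll\P$; otherwise the KL divergence is infinite and there is nothing to prove. Fix $\gamma\in[0,1/m]$. Then $f(x)=1+\gamma(x-m)\ge 1-\gamma m\ge 0$ on $[0,1]$. Moreover $\E_\P[f]=1+\gamma(\int x\,\d\P-m)\le1$. By the Donsker--Varadhan / Gibbs inequality, $\KL(\widehat\Q_n\|\P)\ge \E_{\widehat\Q_n}[\log f]-\log\E_\P[f]\ge \frac1n\sum_i\log f(x_i)$. Alternatively one can use Jensen: $\E_{\widehat\Q_n}[\log(f\,\d\P/\d\widehat\Q_n)]\le\log\E_\P[f]\le0$. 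Taking the infimum over $\P$ and the supremum over $\gamma$ gives the bound $\ge$. The edge case $f(x_i)=0$ for some $i$ makes the right side $-\infty$ and is harmless.

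\emph{Strong duality.} Let $\phi(\gamma)=\frac1n\sum_i\log(1+\gamma(x_i-m))$, which is concave on $[0,1/m]$, and let $\gamma^*$ be a maximizer. The maximizer exists, possibly at $\gamma=1/m$ with value $-\infty$ or $+\infty$ issues, which I treat separately. There are three cases.

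(a) If $\bar x_n\le m$, then $\phi'(0)=\bar x_n-m\le0$, so $\gamma^*=0$ and $\phi=0$. Since $\widehat\Q_n\in\Pcal_m^\le$, the KL-inf is $0$.

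(b) If $\gamma^*\in(0,1/m)$ is interior, the first-order condition $\frac1n\sum_i\frac{x_i-m}{1+\gamma^*(x_i-m)}=0$ holds. I would define $\P^*$ by $\d\P^*/\d\widehat\Q_n(x_i)=1/(1+\gamma^*(x_i-m))$. Its total mass is $\frac1n\sum_i\frac{1}{1+\gamma^*(x_i-m)}=1$, which follows from the FOC since $\frac{1}{1+\gamma^*(x_i-m)}=1-\gamma^*\frac{x_i-m}{1+\gamma^*(x_i-m)}$. Its mean equals $m$, again by the FOC. Hence $\KL(\widehat\Q_n\|\P^*)=\phi(\gamma^*)$.

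(c) If $\gamma^*=1/m$, which is the boundary case when $\phi'(1/m)\ge0$, I would put the leftover mass on the atom $0$. Set $\P^*=\frac1n\sum_i\frac{m}{x_i}\delta_{x_i}+(1-\frac mn\sum_i 1/x_i)\delta_0$. Here $\phi'(1/m)\ge0$ reads $\frac mn\sum_i1/x_i\le1$, with all $x_i>0$ needed for finiteness. The mean is exactly $m$, and $\KL=\frac1n\sum\log(x_i/m)=\phi(1/m)$. If some $x_i=0$ then $\phi(1/m)=-\infty$, so the maximizer is interior and this case does not arise. The case $m=0$, where $[0,\infty)$ and $\gamma\to\infty$ appear, gives $+\infty$ on both sides when some $x_i>0$: every $\P$ with mean $0$ is $\delta_0$, and $\phi\to\infty$.

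Exponentiating and multiplying by $n$ then yields the Fact. The main obstacle is the careful boundary analysis in case (c) together with the $m=0$ and $x_i=0$ degeneracies. The interior case is a routine Lagrangian computation.
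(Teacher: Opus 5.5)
Your proposal is correct and follows essentially the same route as the paper's proof of its general duality result, \cref{th:KL-inf}; the paper itself only cites \citet{honda2010asymptotically} for this Fact. After the substitution $\lambda=m\gamma$, $x\mapsto x/m$, your steps match the paper's one for one: weak duality from the Gibbs variational inequality using $\E_\P[1+\gamma(X-m)]\le 1$, then an explicit minimizer in three cases, namely $\gamma^*=0$ (the empirical measure is feasible), $\gamma^*$ interior (tilting by $1/(1+\gamma^*(x-m))$ via the first-order condition), and $\gamma^*=1/m$ (leftover mass placed at $0$). Your minimizer $\P^*$ is supported on $\{x_1,\dots,x_n\}\cup\{0\}\subseteq[0,1]$, which is exactly why the argument transfers from the paper's class of distributions on $[0,\infty)$ to $\Pcal_m^\le$, and together with your separate treatment of $m=0$ nothing is missing.
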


Combining the above fact with our main theorem, we obtain the following regret-free concentration for KL-inf.

\begin{corollary}[Regret-free concentration for KL-inf]\label{corollary:klinf-concentration}
Using the same notation as above, for independent $X_1,\dots,X_n$ with mean no larger than $m$,
  \begin{equation}\label{eq:klinf-inequality}
\P \left ( n\KLinf (\widehat \Q_n (\mathbf X)\| \Pcal_m^\leq) \geq \log \frac{1}{\alpha} \right ) \leq \alpha.
  \end{equation}
In particular, $\inf  \{ m \in [0,1] : n\KLinf (\widehat \Q_n (\mathbf X) \| \Pcal_m^\leq ) < \log (1 /\alpha) \}$
  is a lower $(1-\alpha)$-confidence set for the mean. 
\end{corollary}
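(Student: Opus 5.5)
The plan is to reduce \eqref{eq:klinf-inequality} to the SymPol inequality by way of the duality in \cref{fact:honda-takemura}. Fix $m\in(0,1]$ and let $X_1,\dots,X_n$ be independent, supported on $[0,1]$, with $\E[X_i]\le m$. Set $E_i = X_i/m$. These are nonnegative and independent, and $\E[E_i]\le 1$. Independent e-variables are co-valid by \eqref{eq:set-inclusions}, so \cref{theorem:maximal-inequality-co-valid} applies to $\bE=(E_1,\dots,E_n)$. Through \eqref{eq:corollary-conjecture} it gives
\[
\P\left(\sup_{\lambda\in[0,1]}\prod_{i=1}^n\left(1-\lambda+\lambda\frac{X_i}{m}\right)\ge x\right)\le \frac1x \quad\text{for all } x>0 .
\]

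Next, identify the supremum. By \eqref{eq:fixed lambda product}, with the substitution $\lambda=m\gamma$, it equals $W_n^\brackell(m)$. By \cref{fact:honda-takemura}, applied pathwise with $\mathbf x=\mathbf X\in[0,1]^n$, we have $\log W_n^\brackell(m)=n\KLinf(\widehat\Q_n(\mathbf X)\|\Pcal_m^\le)$. Since $\log$ is increasing, taking $x=1/\alpha$ gives
\[
\{n\KLinf(\widehat \Q_n(\mathbf X)\|\Pcal_m^\le)\ge\log(1/\alpha)\}=\{W_n^\brackell(m)\ge 1/\alpha\},
\]
and \eqref{eq:klinf-inequality} follows.

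The case $m=0$ is handled separately. Here the conventions $x/0=\infty$ and $[0,1/m]=[0,\infty)$ apply. Since $X_i\ge0$ and $\E[X_i]\le 0$, we get $X_i=0$ almost surely. Then $\widehat\Q_n=\delta_0\in\Pcal_0^\le$, so the KL-inf is $0$, which is less than $\log(1/\alpha)$. The event therefore has probability zero, and the bound holds trivially.

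For the confidence-set claim, let $\mu$ be the true common mean. Apply the inequality at $m=\mu$. With probability at least $1-\alpha$, we have $n\KLinf(\widehat\Q_n\|\Pcal_\mu^\le)<\log(1/\alpha)$, so $\mu$ lies in the set $\{m: n\KLinf<\log(1/\alpha)\}$ and the infimum of that set is at most $\mu$. Moreover, $\Pcal_m^\le$ increases in $m$, so $m\mapsto\KLinf(\widehat\Q_n\|\Pcal_m^\le)$ is nonincreasing. The set is therefore an upward-closed interval containing $\mu$ on that event, and it is a valid lower $(1-\alpha)$-confidence set.

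The only delicate step is matching $W_n^\brackell(m)$ with the $\lambda$-form used in the theorem, including the boundary conventions at $m=0$ and at $\lambda=1$ where some factors may vanish. Everything else is a direct invocation of earlier results. Note the level is $1/\alpha$ and not $2/\alpha$: this is a one-sided statement, so no union bound is needed.
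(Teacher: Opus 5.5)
Your proposal is correct and is essentially the paper's argument: the paper obtains the corollary by applying \eqref{eq:corollary-conjecture} to the independent (hence co-valid) e-variables $X_i/m$, and then identifying the supremum with $W_n^{(\ell)}(m)=\exp\{n\KLinf(\widehat\Q_n(\mathbf X)\|\Pcal_m^\leq)\}$ via \cref{fact:honda-takemura}. Your separate treatment of $m=0$, and the explicit inversion using monotonicity of $m\mapsto\KLinf(\widehat\Q_n\|\Pcal_m^\leq)$, fill in details that the paper leaves implicit.
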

Upper confidence sets can be derived analogously.
When written in the above form, \cref{corollary:klinf-concentration} may come as a surprise to readers who are familiar with the KL-inf. In particular, it is commonly seen in the bandit and sequential testing literature that KL-inf statistics concentrate with a similar threshold to $\log(1/\alpha)$ but inflated by a regret bound; e.g., \citep{agrawal2021regret}. For example, it can be deduced from \citet{orabona2023tight} (see also \citet[Corollary 2.2]{waudby2025universal}) combined with the duality results of \citet{honda2010asymptotically} that
\begin{equation}\label{eq:regret-based-conc}
  \P \left ( n\KLinf(\widehat \Q_n (\mathbf X) \| \Pcal_m^\leq) \geq \log  \frac{1}{\alpha}   + \frac{\log (n + 1)}{2} + \log (2) \right ) \leq \alpha.
\end{equation}
The essential reason   why \eqref{eq:regret-based-conc} holds is that $\exp \{ n\KLinf(\widehat \Q_n (\mathbf X) \| \Pcal_m^\leq ) - \log(n+1) / 2 - \log(2) \}$ being upper-bounded by an e-variable, to which Markov's inequality can be applied. This approach is taken explicitly for the derivation of confidence intervals in \citet{orabona2023tight} and for sequential hypothesis tests in \citet{waudby2025universal}. It should be noted that the aforementioned works that rely on inequalities of the kind in \eqref{eq:regret-based-conc} enjoy \emph{anytime validity}, meaning that inequalities hold when the sample size $n$ is replaced by a data-dependent stopping time. As a consequence, the widths of the resulting confidence intervals cannot scale at the typical $1/\sqrt{n}$ rate (see, for example, discussions in \citep{howard2021time,WR24}). It is therefore natural to ask: do the confidence intervals resulting from the inversion of the sharper inequality \eqref{eq:klinf-inequality} scale at a rate of $1/\sqrt{n}$, and if so, with what constants? The next section answers some of these questions.

\subsection{Asymptotic widths of the KL-inf confidence interval}

For the purposes of studying asymptotic widths of confidence intervals, we assume that the random variables $X_1, \dots, X_n$ are independent and identically distributed (iid) from some distribution $\P$, and the asymptotic behavior of these intervals will be characterized in terms of properties of $\P$.
Define $\mu := \E_\P[X_1] \in [0, 1]$ and $\sigma := \sqrt{\Var_\P[X_1]} \in [0, 1/2]$. 
Define
\[
    \overline W_n(m)
    :=
    \sup_{\gamma\in[-1/(1-m),\,1/m]}
    \prod_{i=1}^n \bigl(1+\gamma(X_i-m)\bigr)
    =\max\left\{W_n^\brackell(m),W_n^\bracku(m)\right\},
\]
with the convention that $1/0=\infty$ when $m$ reaches the endpoints $\{0, 1\}$. For $\alpha\in(0,1)$, define
\[
    C_n
    =
    \left\{m\in[0,1]:\overline W_n(m)<\frac{2}{\alpha}\right\}
    =\left\{W_n^\brackell(m)<\frac{2}{\alpha}\right\}\cap\left\{W_n^\bracku(m)<\frac{2}{\alpha}\right\}.
\]
Let
\(
    \width_n := \sup C_n-\inf C_n
\)
denote the length of the confidence interval $C_n$. With this notation in mind, we have the following theorem.

\begin{theorem}\label{thm:width}
Suppose that $\sigma^2>0$. Then the confidence interval $C_n$ has a length scaling asymptotically as
\[
    \sqrt n\,\width_n
    \longrightarrow
    2\sigma\sqrt{2\log(2/\alpha)}\quad\text{almost surely.}
\]
If $\sigma^2=0$,
then
$\width_n=1- (\alpha/2)^{1/n}$
and hence
$n\width_n \to \log(2/\alpha)$.
\end{theorem}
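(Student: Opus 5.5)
The plan is to reduce the width to the behaviour of $\log \overline W_n(m)$ for $m$ in a shrinking neighbourhood of $\mu$, using the KL-inf duality (\cref{fact:honda-takemura}) together with a local quadratic expansion of the log-wealth.

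\textbf{Degenerate case $\sigma^2=0$.} Here every $X_i$ equals $\mu$ almost surely, so the data are constant.

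If $\mu\in(0,1)$, then for $m<\mu$ the lower wealth is
\[
W_n^\brackell(m)=\sup_{\lambda\in[0,1]}\bigl(1-\lambda+\lambda\mu/m\bigr)^n=(\mu/m)^n .
\]
Hence $W_n^\brackell(m)<2/\alpha$ if and only if $m>\mu(\alpha/2)^{1/n}$. Symmetrically, $W_n^\bracku(m)<2/\alpha$ if and only if
\[
1-m>(1-\mu)(\alpha/2)^{1/n}.
\]
The resulting width is therefore $\width_n=1-(\alpha/2)^{1/n}$ times the sum $\mu+(1-\mu)=1$. The edge cases $\mu\in\{0,1\}$ give the same formula on one side only. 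Finally, $n(1-(\alpha/2)^{1/n})\to\log(2/\alpha)$ by the expansion $1-e^{-t}\sim t$.

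\textbf{Nondegenerate case $\sigma^2>0$.} Write $\ell_n(m)=\log\overline W_n(m)=\sup_\gamma\sum_i\log(1+\gamma(X_i-m))$. This is $n$ times KL-inf, and it is quasi-convex in $m$ by \cref{remark:computational-feasibility}, so $C_n$ is an interval containing $\overline X_n$.

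Parametrize $m=\overline X_n+t/\sqrt n$. The goal is to show, almost surely and uniformly for $t$ in compact sets,
\[
\ell_n(\overline X_n+t/\sqrt n)\to \frac{t^2}{2\sigma^2}.
\]
\emph{Upper bound.} Use $\log(1+y)\le y-y^2/2+y^3/3$. The optimizing $\gamma$ is of order $1/\sqrt n$, so
\[
\sum_i\log(1+\gamma(X_i-m))\le \gamma n(\overline X_n-m)-\frac{\gamma^2}{2}\sum_i(X_i-m)^2+O(n|\gamma|^3).
\]
Maximizing the quadratic gives $n(\overline X_n-m)^2/(2\widehat\sigma^2_m)+o(1)$.

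\emph{Lower bound.} Plug in the explicit choice $\gamma=(\overline X_n-m)/\widehat\sigma^2$ and use $\log(1+y)\ge y-y^2/2-|y|^3$ for $|y|\le1/2$.

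By the strong law of large numbers, $\widehat\sigma^2\to\sigma^2$ almost surely. Since $\mu$ might be $0$ or $1$ only when $\sigma=0$, the point $m$ stays in the interior and the $\gamma$-range constraint is not binding.

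\textbf{From the expansion to the width.} The expansion places the endpoints of $C_n$ at $\overline X_n\pm t_n/\sqrt n$ with $t_n\to\sigma\sqrt{2\log(2/\alpha)}$, which yields the claimed limit $2\sigma\sqrt{2\log(2/\alpha)}$. Two facts must be secured for this:
\begin{enumerate}[(i)]
\item The set $\{m:\ell_n(m)<\log(2/\alpha)\}$ stays within the $O(1/\sqrt n)$ window. This follows from monotonicity of $\ell_n$ in $|m-\overline X_n|$ on each side: it is quasi-convex with minimum $0$ at $\overline X_n$. So once $\ell_n$ exceeds the threshold at $t=\pm T$, it exceeds it beyond $\pm T$.
\item The limit function is strictly increasing in $|t|$, so the crossing points converge.
\end{enumerate}

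The main obstacle is the uniform almost-sure control of the cubic remainder, which requires $|\gamma(X_i-m)|$ to be small uniformly in $i$. Since $X_i\in[0,1]$ and $|\gamma|=O(1/\sqrt n)$ near the optimizer, this holds. One also has to argue that the supremum over $\gamma$ is attained at such small $\gamma$; concavity in $\gamma$ and a comparison at $|\gamma|=K/\sqrt n$ handle this.
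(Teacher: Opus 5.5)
Your proposal is correct and follows essentially the paper's route. The degenerate case is handled by direct computation of $W_n^{(\ell)}$ and $W_n^{(u)}$. The nondegenerate case uses the uniform local quadratic approximation $\ell_n(\overline X_n+t/\sqrt n)\to t^2/(2\sigma^2)$ (the paper's \cref{lem:klinf-asymp}), then sandwiches $C_n$ between windows of half-width $(\sigma\sqrt{2\log(2/\alpha)}\pm\varepsilon)/\sqrt n$.

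The differences are minor, and both of your variants are valid:
\begin{itemize}
\item You localize $C_n$ via monotonicity of $\ell_n$ on each side of $\overline X_n$, which follows from quasi-convexity. The paper instead uses the explicit bound $\ell_n(m)\ge n|m-\overline X_n|^2/4$.
\item You confine the optimizing $\gamma$ to an $O(n^{-1/2})$ window by a concavity comparison at $|\gamma|=K/\sqrt n$. The paper uses a derivative-sign and first-order-condition argument.
\end{itemize}
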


The proof is given in \cref{proof:thm:width}. The key technical tool for the above theorem is a uniform Taylor approximation, stated below and proved in \cref{proof:lem:klinf-asymp}. Since the SymPol sets are always subsets of those based on the KL-inf statistic, their widths are also upper bounded by the same expressions as above. \cref{app:sympol-size} shows that SymPol has the same limiting width as derived above. 

\begin{lemma}\label{lem:klinf-asymp}
For $m\in[0,1]$, define
\[
    \ell_n(m):=\log \overline W_n(m)
    =
    \sup_{\gamma\in[-1/(1-m),\,1/m]}
    \sum_{i=1}^n
    \log\bigl(1+\gamma(X_i-m)\bigr).
\]
Assume $\sigma^2>0$. Near
$\overline X_n$, the profile $\ell_n(m)$ is asymptotically quadratic:
For every fixed $K<\infty$, almost surely we have
\[
\sup_{|t|\le K}
\left|
\ell_n\left(\overline X_n+\frac{t}{\sqrt n}\right)
-\frac{t^2}{2\sigma^2}
\right|
\longrightarrow0.
\]
\end{lemma}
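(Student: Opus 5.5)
Fix $K<\infty$ and write $m_n(t)=\overline X_n+t/\sqrt n$. The plan is a uniform second-order expansion of the log-wealth in $\gamma$, combined with the fact that the optimal $\gamma$ is of order $n^{-1/2}$.

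\textbf{Step 1: a.s.\ regularity.} By the strong law, almost surely $\overline X_n\to\mu$ and $\frac1n\sum_i (X_i-\overline X_n)^2\to\sigma^2>0$. Since $\sigma^2>0$, we have $\mu\in(0,1)$. Hence for large $n$ every $m=m_n(t)$ with $|t|\le K$ lies in a fixed compact $[\delta,1-\delta]\subset(0,1)$. In particular the feasible range $[-1/(1-m),1/m]$ contains a fixed neighbourhood $[-c,c]$ of $0$, with $c>0$, uniformly in $t$.

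\textbf{Step 2: concavity and localization.} Set $f_{n,m}(\gamma)=\sum_i\log(1+\gamma(X_i-m))$. This is concave in $\gamma$, with
\[
f_{n,m}'(0)=\sum_i(X_i-m)=-t\sqrt n,
\qquad
f_{n,m}''(0)=-\sum_i(X_i-m)^2=-n\hat\sigma_n^2-t^2,
\]
where $\hat\sigma_n^2:=\frac1n\sum_i(X_i-\overline X_n)^2$. On $|\gamma|\le c/2$ each factor satisfies $|\gamma(X_i-m)|\le c/2\cdot\max(m,1-m)$, which stays bounded away from $-1$. So a Taylor expansion with cubic remainder gives
\[
f_{n,m}(\gamma)=-t\sqrt n\,\gamma-\tfrac12\bigl(n\hat\sigma_n^2+t^2\bigr)\gamma^2+R,
\qquad |R|\le Cn|\gamma|^3,
\]
with $C$ deterministic. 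Restrict to $|\gamma|\le M/\sqrt n$. There the quadratic part is maximized at $\gamma^*=-t/(\sqrt n\,\hat\sigma_n^2)+o(n^{-1/2})$, with value $t^2/(2\hat\sigma_n^2)+o(1)$ uniformly in $|t|\le K$. The remainder is $O(M^3/\sqrt n)$.

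\textbf{Step 3: the global sup is attained locally (main obstacle).} We must show that $\gamma$ outside $[-M/\sqrt n,M/\sqrt n]$ cannot do better, including $\gamma$ near the boundary points $-1/(1-m)$ and $1/m$, where the logarithm can blow down. Concavity makes this straightforward.
\begin{itemize}
\item Choose $M$ larger than $2K/\sigma^2+1$.
\item From the expansion, at $\gamma=\pm M/\sqrt n$ the value of $f_{n,m}$ is at most $KM-\tfrac12\sigma^2M^2/2+o(1)$. This is strictly below the value at $\gamma^*$ once $M$ is large, because $\hat\sigma_n^2\to\sigma^2$.
\item A concave function that is smaller at both endpoints of an interval than at an interior point cannot exceed that interior value outside the interval.
\end{itemize}
So the supremum over the full feasible range equals the supremum over $|\gamma|\le M/\sqrt n$, for all large $n$ and all $|t|\le K$ simultaneously.

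\textbf{Step 4: conclusion.} Combining Steps 2 and 3,
\[
\sup_{|t|\le K}\bigl|\ell_n(m_n(t))-t^2/(2\hat\sigma_n^2)\bigr|\to0 \quad\text{a.s.}
\]
Since $\hat\sigma_n^2\to\sigma^2$, uniformly over $|t|\le K$ this yields the claimed limit.
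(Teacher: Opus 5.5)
Your proof is correct and is essentially the paper's argument: a second-order Taylor expansion of the concave log-wealth in $\gamma$ with an $O(n|\gamma|^3)$ remainder, combined with concavity to confine the optimal $\gamma$ to an $O(n^{-1/2})$ window. The only variation is in the localization step. You show that $f_{n,m}(\pm M/\sqrt n)$ falls below an interior value; comparing with $f_{n,m}(0)=0$ suffices, since the endpoint bound is really $KM-\tfrac12\sigma^2M^2+o(1)$, which is negative once $M>2K/\sigma^2$. The paper instead checks the signs of $h_{n,t}'(\pm\delta)$ and then uses the first-order condition to obtain $|\gamma^\star_{n,t}|\le(1+\delta)|a_{n,t}|/b_{n,t}$.
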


\subsection{A   general KL-inf  duality theorem}
The concentration inequality for the KL-inf statistic in \cref{corollary:klinf-concentration} relied on a combination of \cref{theorem:maximal-inequality-co-valid} with a duality result of \citet{honda2010asymptotically}. However, the latter is specific to the problem of testing means of bounded random variables while the former holds for generic nonnegative random variables with bounded means. Here, we prove a KL-inf duality theorem for the generic nonnegative setting, arriving at a generalization of \cref{corollary:klinf-concentration}.
\begin{theorem}
\label{th:KL-inf}
For $\mathbf{x} = (x_1, \dots, x_n) \in [0, \infty)^n$, denote their empirical measure by $\widehat \Q_n(\mathbf{x}) := \frac{1}{n}\sum_{i=1}^n \delta_{x_i}$. Define the set of nonnegative distributions with mean at most one
\[
  \mathcal C := \left\{\Q \in \Pcal([0,\infty)): \int x \d\Q (x) \le 1 \right\},
\]
where $\Pcal([0,\infty))$ is the set of probability measures supported on $[0,\infty)$.
 It holds that
\[
  \sup_{\lambda \in[0,1]} \frac{1}{n}\sum_{i=1}^n \log(1-\lambda+\lambda x_i)
  =
  \inf_{\Q \in\mathcal C}\KL(\widehat \Q_n(\mathbf x) \|\Q) \equiv \KLinf(\widehat \Q_n(\mathbf x) \| \mathcal C).
\]
\end{theorem}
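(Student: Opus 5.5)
We prove the two inequalities ``$\le$'' and ``$\ge$'' separately, writing $\widehat\Q=\widehat\Q_n(\mathbf x)$ and $f(\lambda)=\frac1n\sum_{i=1}^n\log(1-\lambda+\lambda x_i)$. The function $f$ is concave on $[0,1]$ and takes values in $[-\infty,\infty)$; it equals $-\infty$ at $\lambda=1$ if some $x_i=0$.

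\emph{Weak duality (sup $\le$ inf).} Fix $\lambda\in[0,1]$ and $\Q\in\mathcal C$ with $\widehat\Q\ll\Q$; otherwise $\KL=\infty$ and there is nothing to prove. Then
\[
f(\lambda)-\KL(\widehat\Q\|\Q)=\E_{\widehat\Q}\left[\log\frac{(1-\lambda+\lambda X)\,\d\Q}{\d\widehat\Q}\right]\le \log \E_{\widehat\Q}\left[\frac{(1-\lambda+\lambda X)\,\d\Q}{\d\widehat\Q}\right]
\]
by Jensen's inequality. The right-hand side equals $\log\int_{\{\d\widehat\Q>0\}}(1-\lambda+\lambda x)\,\d\Q(x)\le\log(1-\lambda+\lambda\int x\,\d\Q)\le 0$, using nonnegativity of the integrand and $\int x\,\d\Q\le1$. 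Taking the supremum over $\lambda$ and the infimum over $\Q$ gives ``$\le$''.

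\emph{Strong duality (construct a $\Q$ attaining the sup).} Let $\lambda^*$ maximize $f$ on $[0,1]$; it exists since $f$ is upper semicontinuous. We split into cases using the one-sided derivative $f'(0)=\bar x-1$, where $\bar x=\frac1n\sum_i x_i$.

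\begin{enumerate}[(i)]
\item If $\bar x\le 1$, then $\lambda^*=0$, $f(\lambda^*)=0$, and $\Q=\widehat\Q\in\mathcal C$ gives $\KL=0$.
\item If $\lambda^*\in(0,1)$, the first-order condition reads
\[
\frac1n\sum_{i=1}^n\frac{x_i-1}{1-\lambda^*+\lambda^* x_i}=0.
\]
Define $\Q=\frac1n\sum_{i=1}^n\frac{1}{1-\lambda^*+\lambda^*x_i}\delta_{x_i}$. Its total mass is $1$, because $\frac{1}{1-\lambda+\lambda x}=1-\lambda\frac{x-1}{1-\lambda+\lambda x}$ and the second term averages to zero by the first-order condition. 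Its mean is $1$, because $\frac{x-1}{1-\lambda+\lambda x}$ averages to zero. Since $\d\widehat\Q/\d\Q=1-\lambda^*+\lambda^*x$ on atoms, this gives $\KL(\widehat\Q\|\Q)=f(\lambda^*)$. Repeated $x_i$ values are handled by merging atoms, which changes nothing.
\item If $\lambda^*=1$, then all $x_i>0$ and $f'(1^-)=1-\frac1n\sum_i 1/x_i\ge0$. Put $\Q=\frac1n\sum_i x_i^{-1}\delta_{x_i}+(1-\frac1n\sum_i x_i^{-1})\delta_0$. This is a probability measure with mean exactly $1$, and $\KL(\widehat\Q\|\Q)=\frac1n\sum_i\log x_i=f(1)$.
\end{enumerate}

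The main obstacle is case (iii), the boundary case. Here the natural dual measure from case (ii) has mass less than one, and the deficit must be placed at $0$, which does not affect the mean. The proof must also confirm that $\widehat\Q\ll\Q$ still holds, which it does because $\widehat\Q$ has no atom at $0$ in this case. Case (ii) additionally uses that $\lambda^*<1$ whenever some $x_i=0$, which keeps every weight $1/(1-\lambda^*+\lambda^* x_i)$ finite.
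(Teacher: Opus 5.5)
Your proposal is correct and follows essentially the same route as the paper. Weak duality uses $\E_\Q[1-\lambda+\lambda X]\le 1$ in the variational inequality, which you derive directly from Jensen. Strong duality uses the same three-case construction of the optimal $\Q$: the empirical measure itself, the tilted measure with density $1/(1-\lambda^*+\lambda^* x)$ from the first-order condition, and, when $\lambda^*=1$, the measure placing the deficit $1-\frac1n\sum_i x_i^{-1}$ at $0$.
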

\cref{th:KL-inf} can be viewed as a generalization of \cref{fact:honda-takemura}. The proof of \cref{th:KL-inf} can be found in \cref{proof:th:KL-inf}.
By \cref{theorem:maximal-inequality-co-valid}, we observe that  for iid nonnegative random variables $Y_1, \dots, Y_n$ that are e-variables under $\P$, it holds that
\begin{equation}
    \P \left ( n\KLinf(\widehat \Q_n(Y_1, \dots, Y_n) \| \mathcal C) \geq \log \frac{1}{\alpha} \right ) \leq \alpha.
\end{equation}

\section{Extension to heterogeneous means}
\label{sec:heterogeneous-means}

In this section, we show the stronger result that
the SymPol inequality remains valid without the common-mean assumption.
In statistical terms, our results apply to compound e-values \citep{ignatiadis2024asymptotic}, which is a relaxation of e-values. Putting the terminology of \cite{ignatiadis2024asymptotic} into our context,   nonnegative random variables $E_1,\dots,E_n$ are \emph{compound e-values (compound e-variables)}
if 
$$
\frac{1}{n}\sum_{i=1}^n \E[E_i]\le 1,
$$
and, by naturally extending \cref{def:1}, they are \emph{co-valid} if 
$$
\E[E_i\mid \mathbf E_{-i}]\le 
\mu_i\in \R,~ \mbox{where $\mathbf E_{-i}=(E_j)_{j\ne i}$ for $i\in[n]$}, \quad \mbox{and}\quad 
\frac{1}{n}\sum_{i=1}^n \mu_i \le 1.
$$
\cite{ignatiadis2024asymptotic} considered the more general context of multiple testing with different hypotheses, and the above formulation corresponds to the simpler case of testing the global null.

We first record a weighted version of the SymPol inequality under co-validity, which is the key lemma that proves \cref{theorem:maximal-inequality-co-valid}.

\begin{lemma}[SymPol demi-(super)martingality]\label{lem:weighted-sympol} 
Let $
\mathbf E=(E_1, \dots, E_n )$ be a vector of co-valid e-variables and $a_1,\ldots,a_n>0$. Define
\[
B_k
=
\frac{S_k(a_1E_1,\dots,a_nE_n)}
     {S_k(a_1,\dots,a_n)},
 \quad k=1,\dots,n,\quad\mbox{with $B_0=1$.}
\]
Then $(B_k)_{k=0}^n$ is a demi-supermartingale.
If, further, each component of $\mathbf E$ has mean $1$, then $(B_k)_{k=0}^n$ is a demi-martingale. 
\end{lemma}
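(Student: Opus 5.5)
Throughout, write $x_i=a_iE_i$ and, for $I\subseteq[n]$, $a^I=\prod_{i\in I}a_i$ and $E^I=\prod_{i\in I}E_i$. Then
\[
B_k=\frac{\sum_{|I|=k}a^I E^I}{S_k(a)} .
\]
Fix $k\in[n]$ and a decreasing, nonnegative $g$, and set $G=g(B_0,\dots,B_{k-1})$. The goal is $\E[B_kG]\le\E[B_{k-1}G]$, with equality-type control in the mean-one case. Each $B_j$ is coordinatewise increasing in $\mathbf E\in[0,\infty)^n$. Hence $G=\phi(\mathbf E)$ for some $\phi\ge0$ that is decreasing in each coordinate.

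\textbf{Step 1 (peeling one factor).} I first show that for every $I$ with $|I|=k$ and every $i\in I$,
\[
\E[E^I G]\le \E[E^{I\setminus\{i\}}G].
\]
Condition on $\mathbf E_{-i}$. The map $E_i\mapsto E^{I\setminus\{i\}}\phi(\mathbf E)$ is nonnegative and decreasing in $E_i$, while $E_i\mapsto E_i$ is increasing. The conditional Chebyshev (covariance) inequality for oppositely monotone functions of the single variable $E_i$, given $\mathbf E_{-i}$, yields
\[
\E[E_i\,E^{I\setminus\{i\}}\phi\mid\mathbf E_{-i}]
\le \E[E_i\mid \mathbf E_{-i}]\,\E[E^{I\setminus\{i\}}\phi\mid\mathbf E_{-i}]
\le \E[E^{I\setminus\{i\}}\phi\mid\mathbf E_{-i}].
\]
The last inequality uses co-validity and nonnegativity. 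Taking expectations gives the claim. In the demi-martingale case, $g$ is only decreasing and need not be nonnegative. There $\E[E_i\mid\mathbf E_{-i}]=1$, so the sign of $\phi$ is irrelevant: the first inequality still holds and the second becomes an equality. Mean one together with $\E[E_i\mid\mathbf E_{-i}]\le1$ forces $\E[E_i\mid\mathbf E_{-i}]=1$ almost surely. This is also where integrability needs a brief check.

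\textbf{Step 2 (averaging with a transport plan).} For each $|I|=k$, choose weights $w_{I,i}\ge0$ for $i\in I$ with $\sum_{i\in I}w_{I,i}=1$. Step 1 then gives
\[
S_k(a)\,\E[B_kG]\le\sum_{|I|=k}a^I\sum_{i\in I}w_{I,i}\,\E[E^{I\setminus\{i\}}G].
\]
Regroup the right side by $J=I\setminus\{i\}$, which has $|J|=k-1$. The right side equals $S_k(a)\E[B_{k-1}G]$ provided that for every $J$ with $|J|=k-1$,
\[
\sum_{i\notin J}a_i\,w_{J\cup\{i\},i}=\frac{S_k(a)}{S_{k-1}(a)} .
\]
When $a_1=\dots=a_n=1$ this is solved by $w_{I,i}=1/k$, since $S_k/S_{k-1}=(n-k+1)/k$. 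That case already covers \cref{theorem:maximal-inequality-co-valid}.

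\textbf{Main obstacle: existence of $w$ for general weights.} I would recast this as a flow problem in the bipartite graph between $k$-subsets $I$ and $(k-1)$-subsets $J$, joined when $J\subset I$. Subset $I$ supplies mass $a^I$, and subset $J$ demands mass $a^J S_k(a)/S_{k-1}(a)$. Both totals equal $S_k(a)$. By max-flow/min-cut (equivalently, a weighted Hall condition), feasibility reduces to checking that, for every family $\mathcal J$ of $(k-1)$-sets, the supply of all $I$ containing some member of $\mathcal J$ is at least the demand of $\mathcal J$. I would first try to verify this with a symmetric explicit candidate, namely weights proportional to $1/a_i$ renormalized within each $I$, or alternatively $w_{I,i}\propto S_{k-1}$-type ratios. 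If no explicit choice works, I would fall back on proving the Hall inequality using Newton/Maclaurin-type inequalities for the elementary symmetric polynomials of $a$.

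If the transport approach stalls, a backup is to replace the single-step averaging by a two-stage argument. That argument would show that $J\mapsto\E[E^JG]$, viewed relative to $c_J=\sum_{i\notin J}a_i$, is negatively correlated under the weights $a^J$, reusing the monotonicity from Step 1.
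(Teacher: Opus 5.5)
Your Step~1 is the paper's key estimate: conditional Chebyshev in the single coordinate $E_i$ given $\mathbf E_{-i}$, then co-validity. Your Step~2 reduction also matches the paper's structure. With $q_j(S)=a^S/S_j(a)$ and $\gamma(J,i)=q_k(J\cup\{i\})\,w_{J\cup\{i\},i}$, your flow equations say exactly that $\gamma$ couples $q_{k-1}$ and $q_k$ on nested pairs $J\subset J\cup\{i\}$, which is the paper's $\gamma_{k-1}(J,i)$.

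The gap is that existence of such a nested coupling for unequal $a_i$ is the whole nontrivial content of the lemma, and you leave it open. As written, your argument proves the lemma only for $a_1=\cdots=a_n$. That suffices for the equal-weight SymPol theorem, but not for the heterogeneous-means theorem, whose proof uses $a_i=\mu_i/\bar\mu$. None of your fallbacks closes the gap:
\begin{itemize}
\item The candidate $w_{I,i}\propto 1/a_i$ is right only at $k=n$. For $n=3$, $k=2$, $a=(1,1,2)$ and $J=\{1\}$ it gives $\sum_{i\notin J}a_iw_{J\cup\{i\},i}=\tfrac12+\tfrac23=\tfrac76$, whereas $S_2(a)/S_1(a)=\tfrac54$.
\item The backup amounts to uniform weights $w_{I,i}=1/k$ followed by the claim $\mathrm{Cov}_{q_{k-1}}\bigl(c_J,\E[E^JG]\bigr)\le0$, and that claim is false. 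Take $G\equiv1$, independent $E_1\equiv1$ and $E_2\equiv\varepsilon$, $a=(1,A)$ and $k=2$. The covariance equals $A(A-1)(1-\varepsilon)/(1+A)^2>0$ whenever $A>1>\varepsilon$.
\item Newton--Maclaurin inequalities only relate the level totals $S_{k-1}(a)$ and $S_k(a)$. Your Hall condition must hold for every family $\mathcal J$ of $(k-1)$-sets, so they cannot verify it on their own.
\end{itemize}

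The missing idea is structural. $q_j$ is the conditional law of $R=\{i:Z_i=1\}$ given $|R|=j$, where the $Z_i$ are independent Bernoulli with $\P(Z_i=1)=a_i/(1+a_i)$. This product measure is strong Rayleigh. The paper invokes the stochastic covering property of its adjacent rank conditionals (Pemantle and Peres, 2014, Proposition~2.3). This yields $I_{k-1}\sim q_{k-1}$ and $I_k\sim q_k$ with $I_{k-1}\subseteq I_k$ almost surely. Then $w_{I,i}=\P(I_{k-1}=I\setminus\{i\}\mid I_k=I)$ solves your system. (By Strassen's theorem, such a nested coupling exists if and only if $q_{k-1}$ is stochastically dominated by $q_k$ on the Boolean lattice. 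That domination is exactly your Hall condition, and the strong Rayleigh property is what supplies it.) With this coupling in hand, the rest of your argument goes through as in the paper, including the demi-martingale case.
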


 Applying Ville's inequality for nonnegative demi-supermartingales in   \cref{lem:ville} to $(B_k)_{k=0}^n$ in \cref{lem:weighted-sympol},  we get 
\[
\mathbb P\left( 
\max_{0\le k\le n}B_k\ge x
\right)
\le \frac1x,
\quad x>0.
\]

\begin{proof}
[Proof of Lemma~\ref{lem:weighted-sympol}]
For $S\subseteq[n]$, write
\[
a_S=\prod_{i\in S}a_i,
\quad
E_S=\prod_{i\in S}E_i,
\]
with the convention $a_\varnothing=E_\varnothing=1$.
We first note that all quantities below are integrable. Indeed, for every
nonempty $S\subseteq[n]$ and every $i\in S$,
\[
\E[E_S]
=
\E\left[
E_{S\setminus\{i\}}
\E[E_i\mid \mathbf E_{-i}]
\right]
\le
\E[E_{S\setminus\{i\}}].
\]
Iterating gives
\(
\E[E_S]\le1.
\)
In particular, each $B_k$ is integrable, and for $i\notin S$,
\[
\E[(E_i+1)E_S]
=
\E[E_{S\cup\{i\}}]+\E[E_S]
\le2.
\]

For $|S|=k$, define
\[
q_k(S)=\frac{a_S}{S_k(a_1,\ldots,a_n)}.
\]
Thus
\[
B_k=\sum_{|S|=k}q_k(S)E_S.
\]
To construct compatible couplings of the laws $q_k$, let
$Z_1,\ldots,Z_n$ be independent Bernoulli random variables with odds $a_i$,
that is,
\[
\mathbb P(Z_i=1)=\frac{a_i}{1+a_i}.
\]
If $R=\{i:Z_i=1\}$, then the conditional law of $R$ given $|R|=k$ is
$q_k$. The product Bernoulli law is strong Rayleigh. Hence, by the
stochastic-covering property of adjacent rank conditionals
\cite[Proposition~2.3]{pemantle2014concentration}, for each $k<n$ there is a
coupling $(I_k,I_{k+1})$ such that
\begin{equation}
\label{eq:monotone-subset-coupling}
I_k\sim q_k,
\quad
I_{k+1}\sim q_{k+1},
\quad
I_k\subseteq I_{k+1}
\quad\text{almost surely}.
\end{equation}
Define
\[
\gamma_k(S,i)
=
\mathbb P\{I_k=S,\ I_{k+1}=S\cup\{i\}\},
\quad |S|=k,\quad i\notin S.
\]
These coefficients are nonnegative, and the two marginals of the coupling give
\begin{equation}
\label{eq:marginals}
\sum_{i\notin S}\gamma_k(S,i)=q_k(S),
\quad |S|=k,
\end{equation}
and
\begin{equation}
\label{eq:marginals-upper}
\sum_{i\in T}\gamma_k(T\setminus\{i\},i)=q_{k+1}(T),
\quad |T|=k+1.
\end{equation}
Consequently,
\[
B_{k+1}
=
\sum_{\substack{|S|=k\\i\notin S}}
\gamma_k(S,i)E_{S\cup\{i\}},
\quad B_k
=
\sum_{\substack{|S|=k\\i\notin S}}
\gamma_k(S,i)E_S.
\]
Subtracting the two identities yields
\begin{equation}
\label{eq:weighted-sympol-increment}
B_{k+1}-B_k
=
\sum_{\substack{|S|=k\\i\notin S}}
\gamma_k(S,i)
\left(E_{S\cup\{i\}}-E_S\right)
=
\sum_{\substack{|S|=k\\i\notin S}}
\gamma_k(S,i)E_S(E_i-1).
\end{equation}

Let
\(
G=g(B_0,\ldots,B_k),
\)
where $g$ is bounded, coordinatewise decreasing, and nonnegative. Since every
$B_j$ is coordinatewise nondecreasing in $(E_1,\ldots,E_n)$, the random
variable $G$ is nonincreasing in each $E_i$. The conditional Chebyshev's association inequality gives
\begin{equation}
\label{eq:conditional-drift-nonpositive}
\E[(E_i-1)G\mid \mathbf E_{-i}]
\le
\E[E_i-1\mid \mathbf E_{-i}]\E[G\mid \mathbf E_{-i}]
\le0.
\end{equation}
Because $i\notin S$, the variable $E_S$ is nonnegative and measurable with
respect to $\mathbf E_{-i}$. Multiplying
\eqref{eq:conditional-drift-nonpositive} by $E_S$ and taking expectations gives
\[
\E[E_S(E_i-1)G]\le0.
\]
Using \eqref{eq:weighted-sympol-increment} and
$\gamma_k(S,i)\ge0$, we conclude that
\[
\E\left[
(B_{k+1}-B_k)g(B_0,\ldots,B_k)
\right]
\le0.
\]
The boundedness restriction on $g$ can be removed by truncation whenever the
relevant expectation is defined. Thus $(B_k)_{k=0}^n$ is a nonnegative demi-supermartingale.

Suppose now that $\mathbb E[E_i]=1$ for every $i$. Since
$\E[E_i\mid \mathbf E_{-i}]\le1$ almost surely and $\mathbb E[\E[E_i\mid \mathbf E_{-i}]]=\mathbb E[E_i]=1$, we have
$\E[E_i\mid \mathbf E_{-i}]=1$ almost surely. The second term in
\eqref{eq:conditional-drift-nonpositive} then vanishes, so the same argument
holds for every bounded coordinatewise decreasing $g$, without requiring
$g\ge0$. Hence $(B_k)_{k=0}^n$ is a demi-martingale. 
\end{proof}

We now provide the  upgrade of \cref{theorem:maximal-inequality-co-valid} to nonnegative random variables with heterogeneous means, which is the strongest result of this paper.
We continue to use the shorthand symbols $\mathbf X=(X_1,\dots,X_n)$ and $\mathbf X_{-i}=(X_j)_{j \ne i}$.

\begin{theorem}[SymPol inequality for co-valid compound e-values]
\label{thm:average-mean-sympol}

Let $X_1,\ldots,X_n$ be nonnegative random variables such that 
 $
\E[X_i\mid \mathbf X_{-i}] \le \mu_i \mbox{ for all $i\in [n]$}
 $ 
 for some constants $\mu_1,\dots,\mu_n\ge 0 $,
 and 
$
\bar\mu =\sum_{i=1}^n\mu_i/n.
$
Throughout, we use the convention $0/0=1$.
Then 
 $$
\mathbb P\left(
\max_{0\le k\le n}
\frac{A_k(\mathbf X)}{\bar{\mu}^k}
\ge x
\right)
\le\frac1x,
\quad x>0. $$
\end{theorem}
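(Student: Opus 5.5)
The plan is to reduce the statement to the weighted demi-supermartingale of \cref{lem:weighted-sympol} by rescaling each $X_i$ by its mean bound, and then to compare the resulting weighted SymPol process with $A_k(\mathbf X)/\bar\mu^k$ via Maclaurin's inequality.

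\textbf{Step 1: discard the indices with $\mu_i=0$.} If $\mu_i=0$, then $X_i\ge 0$ and $\E[X_i]\le \E[\E[X_i\mid \mathbf X_{-i}]]\le 0$, so $X_i=0$ almost surely. Let $J=\{i:\mu_i>0\}$ and $m=|J|$.

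If $m=0$, then $\mathbf X=0$ and $\bar\mu=0$. Every ratio $A_k(\mathbf X)/\bar\mu^k$ is then either $1$ (for $k=0$) or $0/0=1$ (for $k\ge 1$), so the maximum equals $1$. The claim $\P(1\ge x)\le 1/x$ is trivial.

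Assume $m\ge1$, so $\bar\mu>0$. Removing the vanishing coordinates gives, almost surely, $S_k(\mathbf X)=S_k(\mathbf X_J)$ for $k\le m$ and $S_k(\mathbf X)=0$ for $k>m$.

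\textbf{Step 2: rescale and apply \cref{lem:weighted-sympol}.} For $i\in J$ put $E_i=X_i/\mu_i$ and $a_i=\mu_i>0$. Because each $E_i$ is a positive constant multiple of $X_i$, the vector $(E_j)_{j\in J\setminus\{i\}}$ generates the same $\sigma$-field as $(X_j)_{j\in J\setminus\{i\}}$. Conditioning on the remaining zero coordinates changes nothing. Hence $\E[E_i\mid (E_j)_{j\in J, j\ne i}]\le 1$, so $(E_i)_{i\in J}$ are co-valid e-variables.

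Since $S_k(a_iE_i:i\in J)=S_k(\mathbf X_J)$, \cref{lem:weighted-sympol} shows that
\[
B_k=\frac{S_k(\mathbf X_J)}{S_k(\mu_J)},\qquad k=0,\dots,m,\quad B_0=1,
\]
is a nonnegative demi-supermartingale. By \cref{lem:ville}, $\P(\max_{k\le m}B_k\ge x)\le 1/x$ for all $x>0$.

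\textbf{Step 3: pathwise domination by Maclaurin's inequality.} Apply Maclaurin's inequality to the full nonnegative vector $(\mu_1,\dots,\mu_n)$, zeros included:
\[
\frac{S_k(\mu_J)}{\binom nk}=\frac{S_k(\mu_1,\dots,\mu_n)}{\binom nk}\le \bar\mu^k .
\]
Therefore, for $k\le m$,
\[
\frac{A_k(\mathbf X)}{\bar\mu^k}=\frac{S_k(\mathbf X_J)}{\binom nk\bar\mu^k}\le \frac{S_k(\mathbf X_J)}{S_k(\mu_J)}=B_k .
\]
For $k>m$ we have $A_k(\mathbf X)/\bar\mu^k=0\le B_0$. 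Hence $\max_{0\le k\le n}A_k(\mathbf X)/\bar\mu^k\le\max_{0\le k\le m}B_k$, and the tail bound from Step 2 gives the claim.

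\textbf{Main obstacle.} The delicate points are the degenerate coordinates and the direction of the comparison. Maclaurin's inequality must be applied to the full $n$-vector, not only to $\mu_J$. Applying it to $\mu_J$ would give the bound $\binom mk(n\bar\mu/m)^k$, which compares with $\binom nk\bar\mu^k$ in the wrong direction. It also has to be checked that co-validity survives both the rescaling and the restriction to $J$. The integrability needed for \cref{lem:weighted-sympol} is supplied inside that lemma's proof.
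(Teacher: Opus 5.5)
Your proof is correct and follows essentially the same route as the paper: you rescale to $E_i=X_i/\mu_i$, discard the coordinates with $\mu_i=0$ (which vanish almost surely), apply \cref{lem:weighted-sympol} and \cref{lem:ville} to $B_k$, and obtain $A_k(\mathbf X)/\bar\mu^k\le B_k$ from Maclaurin's inequality on the full vector $(\mu_1,\dots,\mu_n)$. The paper differs only cosmetically, by keeping the null coordinates with $E_i=1$, normalizing the weights by $\bar\mu$, and using $0/0=1$ for $k>m$. One correction to your closing remark: Maclaurin applied to $\mu_J$ alone would not go the wrong way, because $\binom mk(n/m)^k\le\binom nk$ for $k\le m\le n$, so that bound is actually stronger; your proof does not rely on this remark.
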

 
If $X_1,\dots,X_n$ are independent, then $\mu_i$ in \cref{thm:average-mean-sympol} can be chosen as the mean of $X_i$. 
Note that $E_i=X_i/\E[X_i]$ 
is an e-variable for each $i$, with the convention $0/0=1$. If $E_1,\dots,E_n$ are co-valid e-variables, then the condition on $\mathbf X$ in \cref{thm:average-mean-sympol} holds.
In particular, if $\mu_1=\dots=\mu_n=1$, then \cref{thm:average-mean-sympol} recovers the main conclusion of \cref{theorem:maximal-inequality-co-valid}.
Moreover,
if $X_1,\dots,X_n$ are co-valid compound e-values, then we can take $\bar \mu=1$.
Therefore,  the SymPol inequality, 
 $$
\mathbb P\left(
\max_{0\le k\le n}
 {A_k(\mathbf E)}
\ge x
\right)
\le\frac1x,
\quad x>0, $$
holds for co-valid compound e-values $E_1,\dots,E_n$.

\begin{proof}
[Proof of Theorem~\ref{thm:average-mean-sympol}]
First observe that if $\mu_i=0$, then
\(
0
\le
\E[X_i]
=
\E\!\left[\E[X_i\mid\mathbf X_{-i}]\right]
\le
\mu_i
=
0.
\)
Since $X_i$ is nonnegative, it follows that
\begin{equation}
\label{eq:zero-bound-zero-variable}
X_i=0
\quad\text{almost surely whenever }\mu_i=0.
\end{equation}

If $\bar\mu=0$, then $\mu_i=0$ for every $i$, and hence
$\mathbf X=\mathbf 0$ almost surely by
\eqref{eq:zero-bound-zero-variable}. Under the convention $0/0=1$,
\[
\max_{0\le k\le n}
\frac{A_k(\mathbf X)}{\bar\mu^k}
=
1
\quad\text{almost surely}.
\]
Consequently, for every $x>0$,
\[
\mathbb P\left(
\max_{0\le k\le n}
\frac{A_k(\mathbf X)}{\bar\mu^k}
\ge x
\right)
=
\mathbf 1_{\{x\le1\}}
\le \frac1x.
\]
Thus it remains to consider the case $\bar\mu>0$.

For $i\in[n]$, define
\begin{equation}
\label{eq:normalized-compound-e-values}
E_i=\frac{X_i}{\mu_i},
\quad
a_i=\frac{\mu_i}{\bar\mu},
\end{equation}
where $E_i=1$ when $\mu_i=0$. By
\eqref{eq:zero-bound-zero-variable}, this convention is consistent,
and
\begin{equation}
\label{eq:scaled-coordinate-factorization}
a_iE_i=\frac{X_i}{\bar\mu}
\quad\text{almost surely for every }i\in[n].
\end{equation}

We next verify that $\mathbf E=(E_1,\ldots,E_n)$ is a vector of
co-valid e-variables. If $\mu_i=0$, then $E_i=1$ almost surely and
hence
\(
\E[E_i\mid\mathbf E_{-i}]=1.
\)
If $\mu_i>0$, then
\[
\E[E_i\mid\mathbf X_{-i}]
=
\frac{1}{\mu_i}
\E[X_i\mid\mathbf X_{-i}]
\le1
\quad\text{almost surely}.
\]
Moreover,
\(
\sigma(\mathbf E_{-i})
\subseteq
\sigma(\mathbf X_{-i}),
\)
because every $E_j$ is a deterministic function of $X_j$. Therefore,
by the tower property,
\begin{equation}
\E[E_i\mid\mathbf E_{-i}]
=
\E\!\left[
  \E[E_i\mid\mathbf X_{-i}]
  \,\middle|\,
  \mathbf E_{-i}
\right]
\le1
\quad\text{almost surely}.
\label{eq:normalized-covalidity-proof}
\end{equation}
Thus $\mathbf E$ is co-valid.

For $k=0,\ldots,n$, define
\begin{equation}
\label{eq:B-with-zero-weights}
B_k
=
\frac{S_k(a_1E_1,\ldots,a_nE_n)}
     {S_k(a_1,\ldots,a_n)},
\end{equation}
again using the convention $0/0=1$. Although
\cref{lem:weighted-sympol} is stated for strictly positive
weights, its maximal conclusion applies here as well. Indeed, omit
the coordinates for which $a_i=0$ and apply
\cref{lem:weighted-sympol} to the remaining positive weights.
Co-validity is preserved when passing to a subvector, by another
application of the tower property.

If $r$ denotes the number of positive weights, then
$S_k(a_1,\ldots,a_n)>0$ for $0\le k\le r$, and the quantities in
\eqref{eq:B-with-zero-weights} agree with those obtained after
omitting the zero-weight coordinates. For $k>r$, both the numerator
and denominator in \eqref{eq:B-with-zero-weights} vanish, so that
$B_k=1$ by convention. Since $B_0=1$, these additional values do not
change the maximum. Hence \cref{lem:weighted-sympol} and Ville's
inequality give
\begin{equation}
\label{eq:B-maximal-current-theorem}
\mathbb P\left(
\max_{0\le k\le n}B_k\ge x
\right)
\le\frac1x,
\quad x>0.
\end{equation}

By \eqref{eq:scaled-coordinate-factorization}, for every
$k=0,\ldots,n$,
\begin{align}
\frac{A_k(\mathbf X)}{\bar\mu^k}
&=
A_k\left(
\frac{X_1}{\bar\mu},
\ldots,
\frac{X_n}{\bar\mu}
\right)
=
\frac{
S_k(a_1E_1,\ldots,a_nE_n)
}{
\binom nk
}
\nonumber\\
&=
\frac{
S_k(a_1,\ldots,a_n)
}{
\binom nk
}
B_k
=
A_k(a_1,\ldots,a_n)B_k.
\label{eq:compound-factorization}
\end{align}
Applied to the nonnegative vector
$(a_1,\ldots,a_n)$, Maclaurin's inequality gives, for every $1\le k\le n$,
\begin{equation}
A_k(a_1,\ldots,a_n)^{1/k}
\le
A_1(a_1,\ldots,a_n)
=
\frac1n\sum_{i=1}^n\frac{\mu_i}{\bar\mu}
=
1.
\label{eq:maclaurin-current-theorem}
\end{equation}
Consequently,
\(
A_k(a_1,\ldots,a_n)\le1
\)
for
\(
k=1,\ldots,n.
\)
Together with \(A_0=1\), equation
\eqref{eq:compound-factorization} therefore implies
\[
\frac{A_k(\mathbf X)}{\bar\mu^k}
\le B_k,
\quad k=0,\ldots,n.
\]
It follows that
\[
\max_{0\le k\le n}
\frac{A_k(\mathbf X)}{\bar\mu^k}
\le
\max_{0\le k\le n}B_k
\quad\text{almost surely}.
\]
Combining this inequality with
\eqref{eq:B-maximal-current-theorem} yields
\[
\mathbb P\left(
\max_{0\le k\le n}
\frac{A_k(\mathbf X)}{\bar\mu^k}
\ge x
\right)
\le
\mathbb P\left(
\max_{0\le k\le n}B_k\ge x
\right)
\le\frac1x,
\]
which proves the result.
\end{proof}

As a consequence of \cref{thm:average-mean-sympol}, for $\theta>0$, the statistic
\begin{equation}
\label{eq:average-mean-p-value}
p_{\mathrm{SP,av}}(\theta;\mathbf x)
=
\frac{1}{
\displaystyle
\max_{0\le k\le n}A_k(\mathbf x/\theta)}
=
\frac{1}{
\displaystyle
\max_{0\le k\le n}\{A_k(\mathbf x)/\theta^k\}}
\end{equation}
is a valid p-value for testing
\[
H_{0,\le}(\theta):
\frac1n\sum_{i=1}^n\E[X_i]\le\theta
\]
 for nonnegative independent data (more generally, under the condition in \cref{thm:average-mean-sympol}). 

\paragraph{Confidence intervals for bounded variables.}
Suppose now that $X_i\in[0,1]$ and that the conditional mean bounds
in \cref{thm:average-mean-sympol} are sharp. More precisely,
let
\(
\mu_i:=\E[X_i],
\)
and assume that
\begin{equation}
\label{eq:conditional-mean-independence}
\E[X_i\mid\mathbf X_{-i}]
=
\mu_i
\qquad\text{almost surely for every }i\in[n].
\end{equation}
Write
\[
\bar\mu
=
\frac1n\sum_{i=1}^n\mu_i
=
\frac1n\sum_{i=1}^n\E[X_i].
\]

Choose $\alpha_L,\alpha_U\ge0$ with
$\alpha_L+\alpha_U\le\alpha<1$. For
$\mathbf x=(x_1,\ldots,x_n)\in[0,1]^n$, define
\[
L_{\mathrm{SP,av}}(\mathbf x)
=
\max_{1\le k\le n}
\left\{\alpha_L A_k(\mathbf x)\right\}^{1/k}
\]
and
\[
U_{\mathrm{SP,av}}(\mathbf x)
=
1-
\max_{1\le k\le n}
\left\{\alpha_U A_k(\mathbf 1-\mathbf x)\right\}^{1/k},
\]
where
\(
\mathbf 1-\mathbf x
=
(1-x_1,\ldots,1-x_n).
\)

Set $\mathbf Y=\mathbf 1-\mathbf X$. Since the coordinatewise
transformation $x\mapsto1-x$ is bijective,
\(
\sigma(\mathbf Y_{-i})
=
\sigma(\mathbf X_{-i}).
\)
Consequently, \eqref{eq:conditional-mean-independence} gives
\[
\E[Y_i\mid\mathbf Y_{-i}]
=
\E[1-X_i\mid\mathbf X_{-i}]
=
1-\mu_i
\quad\text{almost surely}.
\]
Thus \cref{thm:average-mean-sympol} applies to
$\mathbf X$ with conditional mean bounds
$(\mu_1,\ldots,\mu_n)$ and to $\mathbf Y$ with conditional mean
bounds $(1-\mu_1,\ldots,1-\mu_n)$. It follows that
\[
\mathbb P\left\{
L_{\mathrm{SP,av}}(\mathbf X)
\le\bar\mu\le
U_{\mathrm{SP,av}}(\mathbf X)
\right\}
\ge
1-\alpha_L-\alpha_U.
\]

Thus the SymPol endpoint formulas are unchanged: under the exact
conditional-mean condition
\eqref{eq:conditional-mean-independence}, they cover the average of
the component means even when those means differ. This condition is
weaker than independence and includes independent observations as a
special case.

Moreover, writing
\(
\bar x
=
n^{-1}\sum_{i=1}^n x_i,
\)
Maclaurin's inequality gives
\[
L_{\mathrm{SP,av}}(\mathbf x)
\le
\bar x
\le
U_{\mathrm{SP,av}}(\mathbf x),
\]
so the interval is nonempty and contains the sample average.

\section{A fast algorithm for SymPol}
\label{sec:comp}

We   give a fast algorithm  for computing $\max_{0\le k\le n} A_k(\mathbf E)$. Recall the polynomial identity
$$
P(z):=\prod_{i=1}^n (1+E_i z)
    =\sum_{k=0}^n S_k(\mathbf E) z^k.
$$
Therefore, the elementary symmetric polynomial $S_k(\mathbf E)$ is exactly the
coefficient of $z^k$ in $P(z)$, and
$$
A_k(\mathbf E)=\frac{S_k(\mathbf E)}{\binom{n}{k}},
\quad k=0,1,\dots,n.
$$
Hence, computing $\max_{0\le k\le n}A_k(\mathbf E)$ reduces to computing all
coefficients of the polynomial $P(z)$ and then taking the maximum after 
normalizing by $\binom{n}{k}$.
We recursively split the set of linear factors
$$
1+E_1 z,\dots,1+E_n z
$$
into two groups of nearly equal size, compute the product polynomial of each
group recursively, and then multiply the two resulting polynomials using a fast
polynomial multiplication routine.

Let $M(m)$ denote the complexity of multiplying two polynomials of degree at
most $m$. If $T(n)$ denotes the time required to compute
$
P(z),
$
then the divide-and-conquer algorithm satisfies the recurrence
$
T(n)=T(\lfloor n/2\rfloor)+T(\lceil n/2\rceil)+O(M(n)).
$
Thus,
$
T(n)=O(M(n)\log n).
$
In particular, if FFT-based multiplication is used, then
$
M(n)=O(n\log n),
$
and therefore
$$
T(n)=O(n\log^2 n).
$$
Once the coefficients of $P(z)$ are available, all values
$A_k(\mathbf E)$ can be obtained in an additional $O(n)$ time, and hence
$\max_{0\le k\le n} A_k(\mathbf E)$ is computed in overall time
$
O(M(n)\log n),
$
which becomes $O(n\log^2 n)$ under FFT-based multiplication.
We summarize the procedure in \cref{alg:fast-symmetric-max}.

\begin{algorithm}[t]
\caption{Fast computation of $\max_{0\le k\le n} A_k(\mathbf{E})$}
\label{alg:fast-symmetric-max}
\begin{algorithmic}[1]
\Require $E_1,\dots,E_n \ge 0$
\Ensure $\max_{0\le k\le n} A_k(\mathbf{E})$

\Function{BuildProduct}{$L,R$}
    \If{$L=R$}
        \State \Return $1+E_L z$
    \EndIf
    \State $M \gets \lfloor (L+R)/2 \rfloor$
    \State $P_{\mathrm{left}}(z)
        \gets \Call{BuildProduct}{L,M}$
    \State $P_{\mathrm{right}}(z)
        \gets \Call{BuildProduct}{M+1,R}$
    \State \Return
        $\Call{FFTMultiply}
        {P_{\mathrm{left}}(z),P_{\mathrm{right}}(z)}$
\EndFunction

\State $P(z)\gets\Call{BuildProduct}{1,n}$
\State Extract coefficients $c_0,\dots,c_n$ from
\Statex \hspace{\algorithmicindent}
    $\displaystyle P(z)=\sum_{k=0}^n c_k z^k$

\State $B_{\max}\gets 0$
\For{$k=0,\dots,n$}
    \State $A_k\gets c_k/\binom{n}{k}$
    \State $B_{\max}\gets\max\{B_{\max},A_k\}$
\EndFor
\State \Return $B_{\max}$
\end{algorithmic}
\end{algorithm}

\section{Conclusion} 

We establish applications of demi-supermartingales through a version of  Ville's inequality. This leads to new probabilistic inequalities, in particular, the validity of a new method, called SymPol, of combining (co-valid, compound) e-values into a p-value. SymPol is based on elementary symmetric polynomials, admits a fast algorithm, and is shown to yield a smaller (or equal) p-value than the KL-inf statistic does. Our results positively settle an explicit conjecture of~\cite{WZ03} and one of  \cite{gaffke2005three}.\footnote{\cite{gaffke2005three} has a stronger conjecture, which was recently proved by \cite{VlassisThomas2026}, four months after the first arXiv version of our paper; see \cite{MingEtAl2026Gaffke} for a comparative analysis. 
The assumptions and scope for our results differ from theirs, because we prove the inequalities in \cref{conjecture:gwz} under co-validity and heterogeneous means.} 
We hope that the  techniques of using demi-supermartingales with e-values will be useful in other problems in probability and statistics.

\subsection*{Acknowledgments}
AI tools were used to assist with some technical results and language editing. The authors maintain responsibility for the correctness of the claims. 
RW is supported by the Natural Sciences and Engineering Research Council of Canada (CRC-2022-00141, RGPIN-2024-03728). 
IW-S acknowledges support from the Miller Institute for Basic Research in Science.

\appendix

\section{Omitted proofs}

\subsection{Proof of \cref{proposition:sympol-ci}}
\label{proof:proposition:sympol-ci}
\begin{proof}[Proof of \cref{proposition:sympol-ci}]
Put $q=\alpha/2$. For $m\in(0,1]$, under the null $\mu\le m$,
the variables $E_i=X_i/m$ are independent e-variables. By homogeneity,
\[
  A_k(E_1,\ldots,E_n)=m^{-k}A_k(X_1,\ldots,X_n).
\]
Hence the SymPol test rejects $m$ exactly when
$m\le L_n^{\SP}$. At the true mean,
\[
\P\!\left(L_n^{\SP}\ge \mu\right)
=
\P\!\left(
  \max_{0\le k\le n}A_k(X_1/\mu,\ldots,X_n/\mu)
  \ge \frac{2}{\alpha}
\right)
\le \frac\alpha2,
\]
with the case $\mu=0$ handled directly because then $X_i=0$ almost surely.
Applying the same argument to $1-X_i$ gives
\[
  \P\!\left(U_n^{\SP}\le \mu\right)
  \le \frac\alpha2.
\]
The union bound proves coverage.

For nonemptiness, Maclaurin's inequalities imply
\[
  A_k(X_1,\ldots,X_n)^{1/k}\le \overline X_n.
\]
Since $q^{1/k}<1$, this yields
$L_n^{\SP}<\overline X_n$ whenever
$\overline X_n>0$. Applying the same argument to $1-X_i$ gives
$U_n^{\SP}>\overline X_n$ whenever
$\overline X_n<1$. If $\overline X_n=0$, then
$C_n^{\rm SP}
=[0,1-q^{1/n})$; if $\overline X_n=1$, then it equals
$(q^{1/n},1]$. Thus the  interval is always nonempty and contains the
sample mean.

Finally, \cref{theorem:maximal-inequality-co-valid} gives, for every $m$,
\[
W_n^\brackell(m)
\le \max_{0\le k\le n}A_k(X_1/m,\ldots,X_n/m),
\]
and analogously for the upper-tail statistics based on $1-X_i$.
Therefore the SymPol acceptance set is contained in the optimized-product
acceptance set, which yields the two outer endpoint inequalities.
\end{proof}

\subsection{Proof of \cref{thm:width}}
\label{proof:thm:width}
\begin{proof}[Proof of \cref{thm:width}]
Write
\(
    \overline X_n := \frac1n\sum_{i=1}^n X_i,
    \widehat\sigma_n^2 :=
    \frac1n\sum_{i=1}^n (X_i-\overline X_n)^2,
\)
and first consider the case that $\sigma > 0$, which implies $\mu\in(0,1)$.
By the strong law of large numbers,
\(
    \overline X_n\to\mu,
    \widehat\sigma_n^2\to\sigma^2
\)
almost surely. We work throughout on this probability-one event.

Set
\(
    c:=\log(2/\alpha)>0.
\)
We first show that $C_n$ lies inside an $O(n^{-1/2})$ neighborhood of
$\overline X_n$. Let $m\in[0,1]$ and put
\(
    d:=|m-\overline X_n|.
\)
Choose
\(
    \gamma
    :=
    \frac d2\,\operatorname{sgn}(\overline X_n-m).
\)
Then $|\gamma|\le1/2$, so $\gamma$ is feasible. Also
$|\gamma(X_i-m)|\le1/2$. Using the elementary inequality
\(
    \log(1+u)\ge u-u^2,
    \text{ for } |u|\le1/2,
\)
we get
\[
\begin{aligned}
    \frac1n\sum_{i=1}^n
    \log\bigl(1+\gamma(X_i-m)\bigr)
    &\ge
    \gamma(\overline X_n-m)
    -
    \gamma^2\frac1n\sum_{i=1}^n (X_i-m)^2  \\
    &\ge
    \frac{d^2}{2}-\frac{d^2}{4}
    =
    \frac{d^2}{4}.
\end{aligned}
\]
Therefore
\(
    \ell_n(m)\ge \frac{n d^2}{4}.
\)
Consequently, if $m\in C_n$, then $\ell_n(m)<c$, and hence
\(
    |m-\overline X_n|
    <
    \frac{2\sqrt c}{\sqrt n}.
\)
Thus $C_n$ is contained in a fixed $n^{-1/2}$-neighborhood of
$\overline X_n$.
Now define
\(
    r:=\sigma\sqrt{2c}
    =
    \sigma\sqrt{2\log(2/\alpha)}.
\)
We claim that the rescaled set
\[
    \sqrt n(C_n-\overline X_n)
    :=
    \left\{\sqrt n(m-\overline X_n):m\in C_n\right\}
\]
converges to $(-r,r)$ in the sense needed for widths.
Fix $\varepsilon\in(0,r)$. By the local quadratic approximation in \cref{lem:klinf-asymp},
uniformly for $|t|\le r-\varepsilon$,
\[
    \ell_n\left(\overline X_n+\frac{t}{\sqrt n}\right)
    \longrightarrow
    \frac{t^2}{2\sigma^2}
    \le
    \frac{(r-\varepsilon)^2}{2\sigma^2}
    <
    c.
\]
Hence, for all sufficiently large $n$,
\[
    \left[
        \overline X_n-\frac{r-\varepsilon}{\sqrt n},
        \overline X_n+\frac{r-\varepsilon}{\sqrt n}
    \right]
    \subseteq C_n.
\]
The interval above is contained in $[0,1]$ for all large $n$, since
$\overline X_n\to\mu\in(0,1)$.

For the reverse inclusion, the localization bound showed that every
$m\in C_n$ satisfies
\(
    \left|\sqrt n(m-\overline X_n)\right|
    <
    2\sqrt c.
\)
Applying the local quadratic approximation on the compact interval
$[-2\sqrt c,2\sqrt c]$, we get that uniformly over
$r+\varepsilon\le |t|\le 2\sqrt c$,
\[
    \ell_n\left(\overline X_n+\frac{t}{\sqrt n}\right)
    \longrightarrow
    \frac{t^2}{2\sigma^2}
    \ge
    \frac{(r+\varepsilon)^2}{2\sigma^2}
    >
    c.
\]
Therefore, for all sufficiently large $n$, no point of $C_n$ can satisfy
\(
    |m-\overline X_n|\ge \frac{r+\varepsilon}{\sqrt n}.
\)
Thus
\(
    C_n
    \subseteq
    \left[
        \overline X_n-\frac{r+\varepsilon}{\sqrt n},
        \overline X_n+\frac{r+\varepsilon}{\sqrt n}
    \right]
\)
eventually.
Combining the two inclusions, for all sufficiently large $n$,
\[
    \left[
        \overline X_n-\frac{r-\varepsilon}{\sqrt n},
        \overline X_n+\frac{r-\varepsilon}{\sqrt n}
    \right]
    \subseteq C_n
    \subseteq
    \left[
        \overline X_n-\frac{r+\varepsilon}{\sqrt n},
        \overline X_n+\frac{r+\varepsilon}{\sqrt n}
    \right].
\]
Taking widths gives
\(
    \frac{2(r-\varepsilon)}{\sqrt n}
    \le
    \width_n
    \le
    \frac{2(r+\varepsilon)}{\sqrt n}
\)
eventually. 
Since $\varepsilon>0$ was arbitrary,
\[
    \sqrt n\,\width_n\to 2r
    =
    2\sigma\sqrt{2\log(2/\alpha)}.
\]
This proves the case of $\sigma^2 > 0$.

We now handle degenerate distributions with zero variance.
Since $\sigma^2=0$, on a probability-one event,
\(
    X_i=\mu \text{ for every }i\ge1.
\)
We work on this event.
Fix $m\in[0,1]$. Then
\(
    \prod_{i=1}^n \bigl(1+\gamma(X_i-m)\bigr)
    =
    \bigl(1+\gamma(\mu-m)\bigr)^n.
\)
Therefore
\[
    \overline W_n(m)
    =
    \sup_{\gamma\in[-1/(1-m),\,1/m]}
    \bigl(1+\gamma(\mu-m)\bigr)^n.
\]
First suppose $m\in(0,1)$. If $m<\mu$, then $\mu-m>0$, so the quantity
\(
    1+\gamma(\mu-m)
\)
is increasing in $\gamma$. Hence the supremum is attained at the upper
endpoint $\gamma=1/m$, giving
\[
    \overline W_n(m)
    =
    \left(1+\frac{\mu-m}{m}\right)^n
    =
    \left(\frac{\mu}{m}\right)^n.
\]
If $m>\mu$, then $\mu-m<0$, so the same quantity is decreasing in $\gamma$.
The supremum is then attained at the lower endpoint
$\gamma=-1/(1-m)$, giving
\[
    \overline W_n(m)
    =
    \left(1+\frac{m-\mu}{1-m}\right)^n
    =
    \left(\frac{1-\mu}{1-m}\right)^n.
\]
Finally, if $m=\mu$, then
\(
    \overline W_n(\mu)=1.
\)
Let
\(
    c:=\log(2/\alpha).
\)
Since $2/\alpha>1$, we have $c>0$. For $m<\mu$, the condition
$m\in C_n$ is
\(
    \left(\frac{\mu}{m}\right)^n < e^c,
\)
or
\(
    m>\mu e^{-c/n}.
\)
Similarly, for $m>\mu$, the condition $m\in C_n$ is
\(
    \left(\frac{1-\mu}{1-m}\right)^n < e^c,
\)
which is equivalent to
\(
    m<1-(1-\mu)e^{-c/n}.
\)
Therefore, when $\mu\in(0,1)$,
\[
    C_n
    =
    \left(
        \mu e^{-c/n},
        1-(1-\mu)e^{-c/n}
    \right).
\]
Hence
\(
    \width_n=1-e^{-c/n},
\)
as claimed.

The endpoint cases are consistent with the same formula. If $\mu=0$, then
$X_i=0$ for every $i$, and
\(
    \overline W_n(0)=1.
\)
For $m>0$,
\(
    \overline W_n(m)
    =
    \left(\frac{1}{1-m}\right)^n.
\)
Thus
\(
    m\in C_n
    \Longleftrightarrow
    \left(\frac{1}{1-m}\right)^n<e^c,
\)
which is equivalent to
\(
    m<1-e^{-c/n}.
\)
Therefore
\(
    C_n=[0,1-e^{-c/n}),
\)
and again
\(
    \width_n=1-e^{-c/n}.
\)
Likewise, if $\mu=1$, then
\(
    C_n=(e^{-c/n},1],
\)
so again
\(
    \width_n=1-e^{-c/n},
\)
and thus
\(
    n\width_n\to c,
\)
as claimed.
\end{proof}

\subsection{Proof of \cref{lem:klinf-asymp}}
\label{proof:lem:klinf-asymp}
\begin{proof}[Proof of \cref{lem:klinf-asymp}]
    
 Fix $K<\infty$. For $|t|\le K$, define
\[
    m_{n,t}:=\overline X_n+\frac{t}{\sqrt n},
    \quad
    Y_{i,n,t}:=X_i-m_{n,t}.
\]
For all sufficiently large $n$, uniformly over $|t|\le K$, we have
$m_{n,t}\in(0,1)$, because $\overline X_n\to\mu\in(0,1)$.
Now
\[
    \frac1n\sum_{i=1}^n Y_{i,n,t}
    =
    \overline X_n-m_{n,t}
    =
    -\frac{t}{\sqrt n} =: a_{n,t}, 
\]
and
\[
    \frac1n\sum_{i=1}^n Y_{i,n,t}^2
    =
    \frac1n\sum_{i=1}^n
    \left(X_i-\overline X_n-\frac{t}{\sqrt n}\right)^2
    =
    \widehat\sigma_n^2+\frac{t^2}{n} =: b_{n,t}.
\]
Then, uniformly for $|t|\le K$,
\(
    a_{n,t}=O(n^{-1/2}),
    b_{n,t}\to\sigma^2.
\)
For $m=m_{n,t}$, define
\[
    h_{n,t}(\gamma)
    :=
    \frac1n\sum_{i=1}^n
    \log\bigl(1+\gamma Y_{i,n,t}\bigr).
\]
The function $h_{n,t}$ is concave in $\gamma$ on its feasible interval. We first
show that its maximizer is of order $n^{-1/2}$.
Choose some fixed $\delta\in(0,1/2)$. Since $|Y_{i,n,t}|\le1$, the interval
$[-\delta,\delta]$ lies inside the feasible interval for every $m\in[0,1]$.
For $\gamma\in(-\delta,\delta)$,
\[
    h_{n,t}'(\gamma)
    =
    \frac1n\sum_{i=1}^n
    \frac{Y_{i,n,t}}{1+\gamma Y_{i,n,t}}.
\]
Using
\[
    \frac{Y}{1+\gamma Y}
    =
    Y-\gamma\frac{Y^2}{1+\gamma Y},
\]
we get
\[
    h_{n,t}'(\delta)
    =
    a_{n,t}
    -
    \delta\frac1n\sum_{i=1}^n
    \frac{Y_{i,n,t}^2}{1+\delta Y_{i,n,t}},
\]
and
\[
    h_{n,t}'(-\delta)
    =
    a_{n,t}
    +
    \delta\frac1n\sum_{i=1}^n
    \frac{Y_{i,n,t}^2}{1-\delta Y_{i,n,t}}.
\]
Since $|Y_{i,n,t}|\le1$,
\[
    \frac1n\sum_{i=1}^n
    \frac{Y_{i,n,t}^2}{1+\delta Y_{i,n,t}}
    \ge
    \frac{b_{n,t}}{1+\delta},
\]
and similarly
\[
    \frac1n\sum_{i=1}^n
    \frac{Y_{i,n,t}^2}{1-\delta Y_{i,n,t}}
    \ge
    \frac{b_{n,t}}{1+\delta}.
\]
Because $b_{n,t}\to\sigma^2>0$ uniformly for $|t|\le K$ and
$a_{n,t}=O(n^{-1/2})$ uniformly for $|t|\le K$, it follows that, for all
large $n$,
\[
    h_{n,t}'(-\delta)>0
    \quad\text{and}\quad
    h_{n,t}'(\delta)<0
\]
uniformly over $|t|\le K$. Hence, by concavity, every maximizer
$\gamma_{n,t}^\star$ lies in $(-\delta,\delta)$.
At such a maximizer,
\(
    0=h_{n,t}'(\gamma_{n,t}^\star).
\)
Therefore
\[
    0
    =
    a_{n,t}
    -
    \gamma_{n,t}^\star
    \frac1n\sum_{i=1}^n
    \frac{Y_{i,n,t}^2}{1+\gamma_{n,t}^\star Y_{i,n,t}}.
\]
Since $|\gamma_{n,t}^\star|\le\delta$ and $|Y_{i,n,t}|\le1$,
\[
    \frac1n\sum_{i=1}^n
    \frac{Y_{i,n,t}^2}{1+\gamma_{n,t}^\star Y_{i,n,t}}
    \ge
    \frac{b_{n,t}}{1+\delta}.
\]
Thus, uniformly for $|t|\le K$,
\[
    |\gamma_{n,t}^\star|
    \le
    \frac{(1+\delta)|a_{n,t}|}{b_{n,t}}
    =
    O(n^{-1/2}).
\]
We now Taylor expand. Since $|Y_{i,n,t}|\le1$ and
$|\gamma_{n,t}^\star|=O(n^{-1/2})$, Taylor's formula gives, uniformly over
$|t|\le K$,
\[
    \log(1+\gamma Y)
    =
    \gamma Y-\frac{\gamma^2Y^2}{2}
    +
    O(|\gamma|^3),
\]
whenever $|\gamma|\le Cn^{-1/2}$ and $|Y|\le1$. Hence
\[
    h_{n,t}(\gamma)
    =
    \gamma a_{n,t}
    -
    \frac{\gamma^2 b_{n,t}}{2}
    +
    O(|\gamma|^3)
\]
uniformly for $|\gamma|\le Cn^{-1/2}$ and $|t|\le K$.
The quadratic function
\(
    q_{n,t}(\gamma)
    :=
    \gamma a_{n,t}
    -
    \frac{\gamma^2 b_{n,t}}{2}
\)
has maximum
\(
    \sup_\gamma q_{n,t}(\gamma)
    =
    \frac{a_{n,t}^2}{2b_{n,t}},
\)
attained at $\gamma=a_{n,t}/b_{n,t}=O(n^{-1/2})$. Therefore,
using the Taylor expansion both at the true maximizer
$\gamma_{n,t}^\star$ and at $a_{n,t}/b_{n,t}$, we obtain
\[
    \sup_\gamma h_{n,t}(\gamma)
    =
    \frac{a_{n,t}^2}{2b_{n,t}}
    +
    O(n^{-3/2})
\]
uniformly for $|t|\le K$. Multiplying by $n$ gives
\[
\begin{aligned}
    \ell_n\left(\overline X_n+\frac{t}{\sqrt n}\right)
    &=
    n\sup_\gamma h_{n,t}(\gamma)
    =
    \frac{n a_{n,t}^2}{2b_{n,t}}
    +
    O(n^{-1/2}) \\
 &  =
    \frac{t^2}{2(\widehat\sigma_n^2+t^2/n)}
    +
    O(n^{-1/2}).
    \end{aligned}
\]
Thus, uniformly for $|t|\le K$,
\[
    \ell_n\left(\overline X_n+\frac{t}{\sqrt n}\right)
    \longrightarrow
    \frac{t^2}{2\sigma^2}.
\]
This is the local quadratic approximation.
\end{proof}

\subsection{Proof of \cref{th:KL-inf}}\label{proof:th:KL-inf}
\begin{proof}[Proof of \cref{th:KL-inf}]
It is enough to prove the normalized identity with \(\widehat \Q_n:=\widehat \Q_n(\mathbf x)
=\frac1n\sum_{i=1}^n\delta_{x_i}\). For
\(a\in[0,1]\), set
\[
  f_a(x):=1-a+ax=1+a(x-1),
  \quad
  \phi(a):=\E_{\widehat \Q_n}[\log f_a(X)].
\]
The function \(\phi\) is concave on \([0,1]\), with values in
\([-\infty,\infty)\). It attains its supremum: if \(\widehat \Q_n(\{0\})=0\), then
\(\phi\) is continuous on \([0,1]\); if \(\widehat \Q_n(\{0\})>0\), then
\(\phi(a)\to-\infty\) as \(a\uparrow1\), while \(\phi(0)=0\), so the maximum
is attained in some compact subinterval of \([0,1)\).

\paragraph{Weak duality.}
Fix \(\Q\in\mathcal C\) and \(a\in[0,1]\). Since \(f_a\ge0\),
\[
  \E_\Q [f_a(X)]
  = 1-a+a\E_\Q [X]
  \le 1.
\]
The standard variational inequality
\[
  \KL(\widehat \Q_n\|\Q)
  \ge
  \E_{\widehat \Q_n}\left[\log f_a(X)\right]-\log \E_\Q [f_a(X)]
\]
holds for every nonnegative \(f_a\), with the usual extended-value
conventions. For completeness, when \(0<\E_\Q [f_a]<\infty\), it follows by
defining \(\d \Q^{(a)}=f_a\d \Q/\E_\Q [f_a]\) and using
\(\KL(\widehat \Q_n\|\Q^{(a)})\ge0\); the remaining cases follow by the same inequality
with extended values. Therefore
\(
  \KL(\widehat \Q_n\|\Q)
  \ge
  \E_{\widehat \Q_n}[\log f_a(X)],
\)
and hence
\[
  \inf_{\Q\in\mathcal C}\KL(\widehat \Q_n\|\Q)
  \ge
  \sup_{a\in[0,1]}\phi(a).
\]

\paragraph{Equality and the attaining distribution.}
Let \(a^\star\in[0,1]\) maximize \(\phi\). We construct
\(\Q^\star\in\mathcal C\) such that
\(
  \KL(\widehat \Q_n\|\Q^\star)=\phi(a^\star).
\)
This proves the reverse inequality.

\smallskip
\noindent\textbf{Case 1: \(a^\star=0\).}
Since \(\phi\) is concave and is maximized at the left endpoint,
\[
  \phi'_+(0)=\E_{\widehat \Q_n}[X-1]\le0.
\]
Thus \(\E_{\widehat \Q_n} [X]\le1\), so \(\widehat \Q_n\in\mathcal C\). Taking \(\Q^\star=\widehat \Q_n\) gives
\[
  \KL(\widehat \Q_n\|\Q^\star)=0=\phi(0).
\]

\smallskip
\noindent\textbf{Case 2: \(0<a^\star<1\).}
The first-order condition is
\[
  0=\phi'(a^\star)
  =
  \E_{\widehat \Q_n}\left[\frac{X-1}{1-a^\star+a^\star X}\right].
\]
Define \(\Q^\star\) on the support of \(\widehat \Q_n\) by
\[
  \frac{\d\Q^\star}{\d\widehat \Q_n}(x)
  =
  \frac{1}{1-a^\star+a^\star x}.
\]
This is a probability measure. Indeed,
\[
  \E_{\widehat \Q_n}\left[\frac{1}{1-a^\star+a^\star X}\right]
  =
  \E_{\widehat \Q_n}\left[
    1
    -
    a^\star
    \frac{X-1}{1-a^\star+a^\star X}
  \right]
  =
  1.
\]
It also has mean one:
\[
  \E_{\Q^\star}[X]
  =
  \E_{\widehat \Q_n}\left[\frac{X}{1-a^\star+a^\star X}\right]
  =
  \E_{\widehat \Q_n}\left[\frac{X-1}{1-a^\star+a^\star X}\right]
  +
  \E_{\widehat \Q_n}\left[\frac{1}{1-a^\star+a^\star X}\right]
  =
  1.
\]
Thus \(\Q^\star\in\mathcal C\). Finally,
\[
  \KL(\widehat \Q_n\|\Q^\star)
  =
  \E_{\widehat \Q_n}\left[\log\!\left(\frac{\d\widehat \Q_n}{\d\Q^\star}\right)\right]
  =
  \E_{\widehat \Q_n}\left[\log(1-a^\star+a^\star X)\right]
  =
  \phi(a^\star).
\]

\smallskip
\noindent\textbf{Case 3: \(a^\star=1\).}
This case can occur only when \(\widehat \Q_n(\{0\})=0\), since otherwise
\(\phi(1)=-\infty<\phi(0)=0\). Concavity and optimality at the right endpoint
give
\[
  \phi'_-(1)
  =
  \E_{\widehat \Q_n}\left[\frac{X-1}{X}\right]
  =
  1-\E_{\widehat \Q_n}\left[\frac1X\right]
  \ge0.
\]
Hence
\[
  Z:=\E_{\widehat \Q_n}\left[\frac1X\right]\le1.
\]
Define \(\Q^\star\) by putting mass \(\d \widehat \Q_n(x)/x\) on the positive support of
\(\widehat \Q_n\), and putting the remaining mass \(1-Z\) at zero:
\[
  \Q^\star(A)
  :=
  \int_{A\cap(0,\infty)} \frac1x\d \widehat \Q_n(x)
  +
  (1-Z)\mathbf 1_{\{0\in A\}}.
\]
Then \(\Q^\star\) is a probability measure and
\[
  \E_{\Q^\star}[X]
  =
  \int_{(0,\infty)} x\frac1x\d \widehat \Q_n(x)
  =
  1,
\]
so \(\Q^\star\in\mathcal C\). On the support of \(\widehat \Q_n\), \(\d\Q^\star/\d\widehat \Q_n=1/X\),
and therefore
\[
  \KL(\widehat \Q_n\|\Q^\star)
  =
  \E_{\widehat \Q_n}[\log X]
  =
  \phi(1).
\]

In all cases, the lower bound from weak duality is attained. Therefore
\[
  \inf_{\Q\in\mathcal C}\KL(\widehat \Q_n\|\Q)
  =
  \sup_{a\in[0,1]}\E_{\widehat \Q_n}[\log(1-a+aX)],
\]
this is exactly the stated identity.
\end{proof}

\subsection{Limiting width of the SymPol interval}\label{app:sympol-size}

Write
\[
 \overline X_n=\frac1n\sum_{i=1}^nX_i,
 \quad
 \widehat\sigma_n^2
 =
 \frac1n\sum_{i=1}^n(X_i-\overline X_n)^2.
\]

\begin{theorem}[Limiting width of the SymPol interval]
\label{thm:sympol-width}
Let $X_1,X_2,\ldots$ be iid random variables on 
$[0,1]$, and write
\(
  \mu=\mathbb E [X_1],
  \sigma^2=\operatorname{Var}(X_1).
\)
Fix $\alpha\in(0,1)$ and set
\(
 q:=\frac{\alpha}{2},
 a:=\log\frac1q.
\)
Let $L_n^{\SP}$ and $U_n^{\SP}$
be given in \cref{proposition:sympol-ci}.
If $\sigma>0$, then, almost surely,
\begin{align}
 L_n^{\SP}
 &=
 \overline X_n-
 \widehat\sigma_n
 \sqrt{\frac{2\log(2/\alpha)}{n}}
 +o(n^{-1/2}),
 \label{eq:sympol-lower-expansion}\\
 U_n^{\SP}
 &=
 \overline X_n+
 \widehat\sigma_n
 \sqrt{\frac{2\log(2/\alpha)}{n}}
 +o(n^{-1/2}),
 \label{eq:sympol-upper-expansion}\\
  \sqrt n\,
 \bigl(U_n^{\SP}&-L_n^{\SP}\bigr)
 \longrightarrow
 2\sigma\sqrt{2\log\frac{2}{\alpha}}.
\end{align}
If $\sigma=0$, then $\bigl(U_n^{\SP}-L_n^{\SP}\bigr) = 1 - (\alpha/2)^{1/n}$ and so $n\,
 \bigl(U_n^{\SP}-L_n^{\SP}\bigr)
 \longrightarrow
 \log\frac{2}{\alpha}.$
\end{theorem}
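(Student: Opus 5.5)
Since $L_n\le L_n^{\SP}\le\overline X_n\le U_n^{\SP}\le U_n$ by \cref{proposition:sympol-ci}, and since the KL-inf set from \cref{thm:width} (with the same threshold $2/\alpha$) has endpoints at $\overline X_n\mp\sigma\sqrt{2\log(2/\alpha)/n}+o(n^{-1/2})$, the SymPol interval cannot be \emph{wider} than the target asymptotically. More precisely, the proof of \cref{thm:width} shows $L_n=\overline X_n-(r+o(1))/\sqrt n$ with $r=\sigma\sqrt{2a}$. It therefore remains to prove the matching bound from the other side:
\[
L_n^{\SP}\le \overline X_n-\widehat\sigma_n\sqrt{2a/n}+o(n^{-1/2}),
\]
and symmetrically for $U_n^{\SP}$ via $1-X_i$. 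The upper-tail statements \eqref{eq:sympol-upper-expansion} then follow by applying the lower-tail argument to $Y_i=1-X_i$, which has the same variance. The width limit follows by subtracting the two expansions and using $\widehat\sigma_n\to\sigma$ a.s.

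For the reverse bound I will control $\max_k (qA_k(\mathbf X))^{1/k}$ directly. The key estimate is a sharp upper bound on $A_k(\mathbf X)/\overline X_n^k$. Writing $x_i=\overline X_n(1+u_i)$ with $\sum u_i=0$ and $\frac1n\sum u_i^2=v_n:=\widehat\sigma_n^2/\overline X_n^2$, I plan to use the generating function
\[
\prod_i(1+zx_i)=\sum_k\binom nk A_k z^k .
\]
The relevant identity is $A_k=\E[\prod_{i\in I}x_i]$ for a uniform random $k$-subset $I$. I would bound this through a Newton-type identity, or more concretely via $\log\prod(1+zx_i)\le n\log(1+z\overline X_n)-\frac{z^2\widehat\sigma_n^2 n}{2(1+z)^2}$-type inequalities and a saddle-point (Cauchy coefficient) bound. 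The target is
\[
\frac{A_k}{\overline X_n^k}\le \exp\!\Bigl(-\frac{k^2 v_n}{2n}(1+o(1))\Bigr)
\]
for $k=o(n)$, together with a crude bound $A_k^{1/k}\le \overline X_n(1-c)$ for $k\ge \epsilon n$ coming from Maclaurin-type strict inequalities when $\sigma>0$. Given this, for $k=o(n)$,
\[
\frac{1}{\overline X_n}\,(qA_k)^{1/k}\le \exp\!\Bigl(-\frac ak-\frac{kv_n}{2n}\Bigr)(1+o(\cdot)).
\]
Minimizing $a/k+kv_n/(2n)$ over $k$ gives $\sqrt{2av_n/n}$, attained at $k\asymp\sqrt{n}$. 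This produces exactly $L_n^{\SP}\le\overline X_n-\widehat\sigma_n\sqrt{2a/n}+o(n^{-1/2})$. The same computation with $k=\lfloor\sqrt{2an/v_n}\rfloor$ and a matching lower bound on $A_k$ would also give the lower direction directly, in case the comparison with $L_n$ is messy at the $o(n^{-1/2})$ level. For $k$ of order $1$, the term $q^{1/k}$ already pushes the value below $\overline X_n(1-c)$, so the maximum is effectively over $k\asymp\sqrt n$.

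The main obstacle is the uniform second-order estimate of $A_k/\overline X_n^k$ for $k\asymp\sqrt n$ with error $o(k^2/n)$ in the exponent. I would first try the probabilistic representation: sampling without replacement, $\log\prod_{i\in I}(1+u_i)=\sum_{i\in I}u_i-\frac12\sum_{i\in I}u_i^2+O(\sum|u_i|^3)$. This is valid when $\overline X_n$ is bounded away from $0$, but the $u_i$ are only bounded, not small, so I would instead use the exact generating function with the saddle point $z^\star\asymp k/n$ and bound the coefficient by $\prod(1+z^\star x_i)/z^{\star k}$ divided by $\binom nk$. Expanding $\sum\log(1+z^\star x_i)$ to second order in $z^\star=O(n^{-1/2})$ is legitimate since $x_i\in[0,1]$. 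This Chernoff-type coefficient bound gives the needed upper estimate with relative error $O(k^3/n^2)=o(k^2/n)$.

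The degenerate case $\sigma=0$ is immediate: $X_i=\mu$ a.s., so $A_k=\mu^k$. The maximum over $k$ of $q^{1/k}\mu$ is $q^{1/n}\mu$, and similarly $U_n^{\SP}=1-q^{1/n}(1-\mu)$, giving width $1-q^{1/n}$ and $n(1-q^{1/n})\to a$.
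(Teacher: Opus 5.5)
\textbf{The direction you reduce to \cref{thm:width} is sound; the direction you prove directly has a genuine gap.}

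Since $L_n\le L_n^{\SP}$, and the proof of \cref{thm:width} sandwiches $C_n$ between $\overline X_n\pm(r\mp\varepsilon)/\sqrt n$, you get $L_n^{\SP}\ge \overline X_n-\widehat\sigma_n\sqrt{2a/n}-o(n^{-1/2})$ with no work on $A_k$. This is a legitimate shortcut; the paper instead gets both directions from a two-sided expansion of $\log A_k$. Your $\sigma=0$ case is also fine.

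The gap is the tool you propose for the upper bound on $A_k$. The Cauchy/saddle-point bound $\binom nk A_k\le \prod_i(1+zX_i)/z^k$ loses a polynomial prefactor. Even when every $X_i$ equals $\overline X_n$, so that $A_k=\overline X_n^k$ exactly, optimizing over $z$ gives only $A_k\le \overline X_n^k\sqrt{2\pi k(n-k)/n}\,(1+o(1))$. So what this route actually proves is
$\log(A_k/\overline X_n^k)\le \tfrac12\log(2\pi k)-\tfrac{k^2v_n}{2n}(1+o(1))$.

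At the degrees that matter, $k\asymp\sqrt n$, the main term $k^2v_n/(2n)$ is $O(1)$, while the loss is about $\tfrac14\log n\to\infty$. In the relevant quantity $\tfrac{\sqrt n}{k}\log(qA_k/\overline X_n^k)$, this effectively replaces $a$ by $a-\tfrac14\log n+O(1)$. The bound is therefore vacuous, weaker even than Maclaurin's $\rho_{k,n}\le q^{1/k}$, exactly where the maximum over $k$ is attained. Your claimed ``relative error $O(k^3/n^2)$'' only covers the Taylor remainder in $\sum_i\log(1+z^\star X_i)$. It does not cover the passage from a single coefficient to $P(z^\star)/z^{\star k}$.

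\textbf{What is needed is additive $o(1)$ accuracy.} You need $\log(A_k/\overline X_n^k)$ to additive $o(1)$, uniformly for $\delta\sqrt n\le k\le M\sqrt n$. An anticoncentration bound of order $k^{-1/2}$ for the tilted count would not suffice: it leaves an $O(1)$ additive error, which shifts the constant $a$ and hence the limiting width.

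You could repair your route with a sharp local limit theorem for the tilted Poisson--binomial count $\sum_i B_i$, where $B_i\sim\mathrm{Bernoulli}(zX_i/(1+zX_i))$. The statement needed is that its point probability at $k$, divided by the corresponding homogeneous one, tends to $1$ uniformly in $k$.

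The paper avoids this by an exact identity. Take $I_1,\dots,I_k$ iid with $\P(I_j=i)=X_i/(n\overline X_n)$, and $J_1,\dots,J_k$ iid uniform on $[n]$. Then
$A_k(X_1/\overline X_n,\dots,X_n/\overline X_n)=\P(I_1,\dots,I_k\text{ distinct})/\P(J_1,\dots,J_k\text{ distinct})$.
A Poisson approximation for the number of colliding pairs (via factorial moments and Bonferroni inequalities) then gives
$\log A_k=k\log\overline X_n-\tfrac{v_n}{n}\binom k2+o(1)$
in both directions. With that estimate, the rest of your outline goes through: Maclaurin monotonicity disposes of $k$ outside $[\delta\sqrt n,M\sqrt n]$, and you then optimize $a/\kappa+v\kappa/2$.
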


\begin{proof}
We divide the proof into four steps.
\paragraph{Step 1: inversion of the SymPol test.}

The SymPol inequality says that, for co-valid e-variables
$E_1,\ldots,E_n$ and every $q\in(0,1)$,
\begin{equation}
 \mathbb P\left(
   \max_{0\le k\le n}
   A_k(E_1,\ldots,E_n)
   \ge \frac1q
 \right)
 \le q.
 \tag{S}
\end{equation}

For testing
\(
 H_{0,L}(\theta):\mu\le\theta,
\)
use
\(
 E_i=\frac{X_i}{\theta}.
\)
Under the null, these are independent e-variables, and homogeneity
gives
\[
 A_k
 \left(
 \frac{X_1}{\theta},\ldots,\frac{X_n}{\theta}
 \right)
 =
 \frac{A_k(X_1,\ldots,X_n)}{\theta^k}.
\]
Thus the nonrejected values of $\theta$ form the ray
$(L_n^{\SP},1]$, where
\[
 L_n^{\SP}
 =
 \max_{1\le k\le n}
 \left\{
 qA_k(X_1,\ldots,X_n)
 \right\}^{1/k}.
\]
For the other direction, test
\(
 H_{0,U}(\theta):\mu\ge\theta
\)
using
\(
 \widetilde E_i=\frac{1-X_i}{1-\theta}.
\)
This gives
\[
 U_n^{\SP}
 =
 1-
 \max_{1\le k\le n}
 \left\{
 qA_k(1-X_1,\ldots,1-X_n)
 \right\}^{1/k}.
\]
Using error $q=\alpha/2$ in each tail gives coverage at least
$1-\alpha$ by the union bound.

\paragraph{Step 2: a no-collision estimate.}

We first establish an auxiliary birthday-problem estimate. Let
\(
 \mathbf p^{(n)}=(p_{1,n},\ldots,p_{n,n})
\)
be a probability vector satisfying
\(
 \max_i p_{i,n}\le\frac{C}{n}
\)
for some fixed $C<\infty$. Let $I_1,\ldots,I_k$ be iid with
law $\mathbf p^{(n)}$, and let $\mathcal D_k$ denote the event that all
$I_j$ are distinct.

For every fixed $0<\delta<M<\infty$, uniformly over
\(
 \delta\sqrt n\le k\le M\sqrt n,
\)
we claim that
\begin{equation}
 \log\mathbb P(\mathcal D_k)
 =
 -\binom{k}{2}\sum_{i=1}^n p_{i,n}^2+o(1).
 \label{eq:no-collision}
\end{equation}

To prove this, let
\[
 W:=
 \sum_{1\le r<s\le k}
 \ind\{I_r=I_s\}
\]
be the number of colliding pairs, and define
\[
 s_{j,n}:=\sum_{i=1}^n p_{i,n}^j,
 \quad
 \lambda_n:=\binom{k}{2}s_{2,n}.
\]
The assumptions imply $\lambda_n=O(1)$.
We show that, for every fixed positive integer $r$,
\begin{equation}
 \mathbb E[(W)_r]=\lambda_n^r+o(1),
 \label{eq:factorial-moments}
\end{equation}
where
\[
 (W)_r=W(W-1)\cdots(W-r+1).
\]

Expand $(W)_r$ as a sum over ordered $r$-tuples of distinct
edges of the complete graph on the $k$ draw positions. If those
$r$ edges are vertex-disjoint, their joint probability is
$s_{2,n}^r$. The total matching contribution is therefore
\[
 \frac{(k)_{2r}}{2^r}s_{2,n}^r
 =
 \lambda_n^r+o(1).
\]

Now consider a nonmatching edge tuple. Let $v$ be the number of
vertices used by its graph and let $c$ be the number of connected
components. If the component sizes are $v_1,\ldots,v_c$, then the
probability of the corresponding intersection of collision events is
\(
 \prod_{j=1}^c s_{v_j,n}.
\)
For every $m\ge2$,
\[
 s_{m,n}
 \le
 \left(\max_i p_{i,n}\right)^{m-1}
 \sum_i p_{i,n}
 =
 O\left(n^{-(m-1)}\right).
\]
It follows that the intersection probability is
\(
 O\left(n^{-(v-c)}\right).
\)
There are $O(k^v)=O(n^{v/2})$ embeddings of any fixed graph
type. Since a nonmatching graph has at least one component with at
least three vertices,
\(
 v\ge2c+1.
\)
Its total contribution is consequently
\[
 O\left(n^{v/2-(v-c)}\right)
 =
 O\left(n^{c-v/2}\right)
 =
 O(n^{-1/2}).
\]
There are only finitely many graph types for fixed $r$, which proves
\eqref{eq:factorial-moments}.

Take an arbitrary subsequence along which
$\lambda_n\to\lambda$. The Bonferroni inequalities give, for each
fixed integer $R$,
\[
 \sum_{r=0}^{2R+1}
 \frac{(-1)^r}{r!}\mathbb E[(W)_r]
 \le
 \mathbb P(W=0)
 \le
 \sum_{r=0}^{2R}
 \frac{(-1)^r}{r!}\mathbb E[(W)_r].
\]
Using \eqref{eq:factorial-moments}, then letting $n\to\infty$ and
subsequently $R\to\infty$, gives
\[
 \mathbb P(W=0)=e^{-\lambda_n}+o(1).
\]
The argument applies to every subsequence, and therefore the error is
uniform for $k/\sqrt n\in[\delta,M]$. Since $\lambda_n$ is
uniformly bounded, taking logarithms proves
\eqref{eq:no-collision}.

\paragraph{Step 3: asymptotics of the optimized elementary symmetric mean.}

We now prove a deterministic lemma. Let
\(
\mathbf x_n
=
(x_{1,n},\ldots,x_{n,n})
\in[0,B]^n
\)
for a fixed $B<\infty$, and define
\[
 m_n:=\frac1n\sum_{i=1}^n x_{i,n},
 \quad
 s_n^2:=
 \frac1n\sum_{i=1}^n(x_{i,n}-m_n)^2.
\]
Assume
\(
 m_n\to m>0,
 s_n\to s>0.
\)
Put
\[
 B_n(q):=
 \max_{1\le k\le n}
 \left\{
 qA_k(x_{1,n},\ldots,x_{n,n})
 \right\}^{1/k}.
\]
We claim that
\begin{equation}
 \sqrt n\,\bigl(m_n-B_n(q)\bigr)
 \longrightarrow
 s\sqrt{2\log(1/q)}.
 \label{eq:deterministic-claim}
\end{equation}
Normalize the observations by setting
\[
 z_{i,n}:=\frac{x_{i,n}}{m_n},
 \quad
 p_{i,n}:=\frac{z_{i,n}}{n},
 \quad
 c_n^2:=\frac{s_n^2}{m_n^2}.
\]
Then
\(
 \sum_{i=1}^n p_{i,n}=1,
\)
and, eventually,
\(
 \max_i p_{i,n}\le\frac{C}{n}
\)
for some fixed $C$.

Let $I_1,\ldots,I_k$ be sampled iid from $\mathbf p^{(n)}$ and
let $J_1,\ldots,J_k$ be sampled iid uniformly from
$\{1,\ldots,n\}$. Direct expansion gives the exact identity
\begin{equation}
 A_k(z_{1,n},\ldots,z_{n,n})
 =
 \frac{
 \mathbb P(I_1,\ldots,I_k\text{ are distinct})
 }{
 \mathbb P(J_1,\ldots,J_k\text{ are distinct})
 }.
 \label{eq:birthday-ratio}
\end{equation}
Indeed,
\[
 \mathbb P(I_1,\ldots,I_k\text{ distinct})
 =
 \frac{k!}{n^k}
 \sum_{|S|=k}\prod_{i\in S}z_{i,n},
\]
whereas
\[
 \mathbb P(J_1,\ldots,J_k\text{ distinct})
 =
 \frac{k!}{n^k}\binom nk.
\]

Moreover,
\[
 \sum_{i=1}^n p_{i,n}^2
 =
 \frac{1+c_n^2}{n}.
\]
Applying \eqref{eq:no-collision} to the numerator and denominator of
\eqref{eq:birthday-ratio} gives, uniformly for
$\delta\sqrt n\le k\le M\sqrt n$,
\begin{equation}
 \log A_k(x_{1,n},\ldots,x_{n,n})
 =
 k\log m_n
 -
 \frac{c_n^2}{n}\binom{k}{2}
 +o(1).
 \label{eq:Ak-expansion}
\end{equation}

Let
\(
 a:=\log\frac1q
\)
and write
\[
 \rho_{k,n}
 :=
 \frac{
 \{qA_k(x_{1,n},\ldots,x_{n,n})\}^{1/k}
 }{m_n}.
\]
For $k/\sqrt n$ in a fixed compact subset of $(0,\infty)$,
\eqref{eq:Ak-expansion} yields, uniformly,
\begin{equation}
 \sqrt n\log\rho_{k,n}
 =
 -\frac{a\sqrt n}{k}
 -
 \frac{c_n^2(k-1)}{2\sqrt n}
 +o(1).
 \label{eq:rho-expansion}
\end{equation}

If
\(
 \frac{k}{\sqrt n}\to\kappa\in(0,\infty),
\)
then the right-hand side converges to
\[
 -f(\kappa),
 \quad
 f(\kappa):=
 \frac{a}{\kappa}+\frac{c^2\kappa}{2},
 \quad
 c:=\frac{s}{m}.
\]
The function $f$ has the unique minimizer
\(
 \kappa_*=\frac{\sqrt{2a}}{c}
\)
and
\(
 f(\kappa_*)=c\sqrt{2a}.
\)
It remains to show that degrees far from order $\sqrt n$ cannot
maximize. Newton's inequalities imply Maclaurin's inequalities:
\begin{equation}
 A_k(\mathbf x_n)^{1/k}
 \text{ is nonincreasing in }k,
 \quad
 A_k(\mathbf x_n)^{1/k}
 \le A_{1}(\mathbf x_n)=m_n.
 \tag{M}
\end{equation}
Choose
\(
 0<\delta<\kappa_*<M
\)
such that
\(
 \frac{a}{\delta}>c\sqrt{2a},
 \frac{c^2M}{2}>c\sqrt{2a}.
\)
For $k\le\delta\sqrt n$, equation (M) gives
\[
 \sqrt n\log\rho_{k,n}
 \le
 -\frac{a\sqrt n}{k}
 \le
 -\frac{a}{\delta}.
\]
Thus such small degrees cannot attain the asymptotic optimum.
For $k\ge M\sqrt n$, let
\(
 k_M:=\lfloor M\sqrt n\rfloor.
\)
Again by equation (M), and because $q^{1/k}\le1$,
\[
 \rho_{k,n}
 \le
 \frac{A_{k_M}(\mathbf x_n)^{1/k_M}}{m_n}.
\]
Using \eqref{eq:Ak-expansion} at $k_M$ gives
\[
 \limsup_{n\to\infty}
 \sup_{k\ge M\sqrt n}
 \sqrt n\log\rho_{k,n}
 \le
 -\frac{c^2M}{2}.
\]
Thus degrees larger than $M\sqrt n$ also cannot attain the
asymptotic optimum.

It follows that the maximizing degree lies in
$[\delta\sqrt n,M\sqrt n]$ asymptotically. Taking the maximum in
\eqref{eq:rho-expansion} therefore yields
\[
 \sqrt n
 \log\frac{B_n(q)}{m_n}
 \longrightarrow
 -c\sqrt{2a}.
\]
Since the logarithm is $O(n^{-1/2})$, exponentiation gives
\[
 B_n(q)
 =
 m_n-
 \frac{s\sqrt{2a}}{\sqrt n}
 +o(n^{-1/2}).
\]
This proves \eqref{eq:deterministic-claim}. Since $s_n\to s$, it
may equivalently be written as
\begin{equation}
 B_n(q)
 =
 m_n-
 s_n\sqrt{\frac{2\log(1/q)}{n}}
 +o(n^{-1/2}).
 \label{eq:Bn-expansion}
\end{equation}

\paragraph{Step 4: application to the confidence endpoints.}

Assume first that $\sigma>0$. Since $X_i\in[0,1]$, this implies
\(
 0<\mu<1.
\)
By the strong law of large numbers,
\(
 \overline X_n\to\mu,
 \widehat\sigma_n\to\sigma
 \text{ almost surely}.
\)
Apply \eqref{eq:Bn-expansion} pathwise to $x_{i,n}=X_i$. Since
$q=\alpha/2$, this gives
\[
 L_n^{\SP}
 =
 \overline X_n-
 \widehat\sigma_n
 \sqrt{\frac{2\log(2/\alpha)}{n}}
 +o(n^{-1/2})
 \quad\text{almost surely}.
\]

Apply the same result to $1-X_i$. Its empirical mean is
$1-\overline X_n$ and its empirical variance is again
$\widehat\sigma_n^2$. Therefore,
\[
 1-U_n^{\SP}
 =
 (1-\overline X_n)-
 \widehat\sigma_n
 \sqrt{\frac{2\log(2/\alpha)}{n}}
 +o(n^{-1/2})
 \quad\text{almost surely}.
\]
Equivalently,
\[
 U_n^{\SP}
 =
 \overline X_n+
 \widehat\sigma_n
 \sqrt{\frac{2\log(2/\alpha)}{n}}
 +o(n^{-1/2})
 \quad\text{almost surely}.
\]

Subtracting the endpoint expansions gives
\[
 U_n^{\SP}-L_n^{\SP}
 =
 2\widehat\sigma_n
 \sqrt{\frac{2\log(2/\alpha)}{n}}
 +o(n^{-1/2}).
\]
Since $\widehat\sigma_n\to\sigma$ almost surely, it follows that
\[
 \sqrt n\,
 \bigl(U_n^{\SP}-L_n^{\SP}\bigr)
 \longrightarrow
 2\sigma\sqrt{2\log\frac{2}{\alpha}}
 \quad\text{almost surely}.
\]

Finally, suppose that $\sigma=0$. Then $X_i=\mu$ almost surely, so
\[
 A_k(\mathbf X)=\mu^k,
 \quad
 A_k(\mathbf 1-\mathbf X)=(1-\mu)^k.
\]
Since $q^{1/k}$ is increasing in $k$,
\(
 L_n^{\SP}=\mu q^{1/n},\ 
 U_n^{\SP}=1-(1-\mu)q^{1/n}.
\)
Consequently,
\[
 U_n^{\SP}-L_n^{\SP}
 =
 1-q^{1/n}
 =
 \frac{\log(1/q)}{n}+o(n^{-1}),
\]
as desired. 
\end{proof}

\bibliographystyle{plainnat}
\bibliography{bib}

\end{document}